\definecolor{cornellred}{rgb}{0.7, 0.11, 0.11}
\theoremstyle{definition}
\theoremstyle{definition}
\newtheorem{observation}{Observation}
\newtheorem{definition}{Definition}
\newtheorem{proposition}{Proposition}
\newtheorem{corollary}{Corollary}
\newtheorem{lemma}{Lemma}
\newcommand{\prob}[2][]{\mathbb{P}\ifthenelse{\not\equal{}{#1}}{_{#1}}{}\!\left[#2\right]}
\newcommand{\expect}[2][]{{\mathbb{ E}}\ifthenelse{\not\equal{}{#1}}{_{#1}}{}\!\left[#2\right]}
\newcommand{\ex}[2]{\expect[#1]{#2}}
\DeclareMathOperator{\argmax}{argmax}
\newcommand{\SSpace}{\Theta}
\newcommand{\BScheme}[1]{\phi_{\theta}(#1)}
\newcommand{\util}[2]{u_{#1}^{\theta}(#2)}
\newcommand{\Ffun}[1]{f^{\theta}(#1)}
\newcommand{\Gfun}[2]{g_{#1}^{\theta}(#2)}
\newcommand{\SDist}{\mu_{\theta}}
\newcommand{\agents}{N}
\newcommand{\pay}[2]{p^{(#1)}(#2)}
\newcommand{\pr}[2]{\prob[#1]{#2}}
\newcommand{\customlabel}[2]{%
   \protected@write \@auxout {}{\string \newlabel {#1}{{#2}{\thepage}{#2}{#1}{}} }%
   \hypertarget{#1}{#2}
}
\newenvironment{talign*}
 {\csname align*\endcsname}
 {\endalign}
\begin{document}
\title{Persuasion and Incentives Through the Lens of Duality}  

\author{
Shaddin Dughmi
\thanks{Department of Computer Science, University of Southern California, shaddin@usc.edu. Supported by NSF CAREER Award CCF-1350900.} \and
Rad Niazadeh
\thanks{Department of Computer Science, Stanford University,  rad@cs.stanford.edu.} \and 
Alexandros Psomas
\thanks{Simons Institute for the Theory of Computing, alexpsomi@cs.berkeley.edu.} \and
S. Matthew Weinberg
\thanks{Department of Computer Science, Princeton University, smweinberg@princeton.edu. Supported by NSF CCF-1717899}
}

\date{}
\maketitle

\begin{abstract}
Lagrangian duality underlies both classical and modern mechanism design. In particular, the dual perspective often permits simple and detail-free characterizations of optimal and approximately optimal mechanisms. This paper applies this same methodology to a close cousin of traditional mechanism design, one which shares conceptual and technical elements with its more mature relative: the burgeoning field of persuasion. The dual perspective permits us to analyze optimal persuasion schemes both in settings which have been analyzed in prior work, as well as for natural generalizations which we are the first to explore in depth. Most notably, we permit combining persuasion policies with payments, which serve to augment the persuasion power of the scheme. In both single and multi-receiver settings, as well as under a variety of constraints on payments, we employ duality to obtain structural insights, as well as tractable and simple characterizations of optimal policies.

\end{abstract}

\section{Introduction}
%Intro paragraph
\label{sec:intro}
There are two primary ways of influencing the actions of strategic agents: through providing incentives and through influencing beliefs. The former is the domain of traditional \emph{mechanism design}, and involves the promise of payments or goods contingent on behavior. The latter is the domain of \emph{information design}, or \emph{persuasion}, and involves the selective provision of information pertaining to the payoffs and costs of various actions. There are striking similarities and parallels  between the two worlds, both in terms of the domains in which they are studied --- for example in auctions~\citep{emeksignaling,DIR14,daskalakis_information} and routing \citep{bhaskar_signaling} --- as well in the mathematical models and techniques used to characterize and compute optimal policies (e.g. \citep{DX-16,DH-17,kolotilin_privateinfo}). Combining the approaches and techniques  of mechanism design and persuasion leads to a more powerful toolkit for the design of economic systems, and this paper takes a step in that direction.

We work with two models of persuasion: the Bayesian Persuasion model of \citet{KG-11}, and the multi-receiver private Bayesian persuasion model of \citet{AB-16} (further developed in \cite{babichenko_barman} and \citet{DH-17}) which we generalize to allow externalities.
 In the spirit of mechanism design and principal-agent problems, we generalize both models by permitting payments,  which serve as additional incentive for the receiver(s) of information to behave in accordance with the wishes of the sender (principal). 
 
 We then explore these models through the lens of \emph{Lagrangian duality}, much in the spirit of the literature applying duality to auction theory and Bayesian mechanism design. In particular, we vary constraints on the payments (arbitrary, nonnegative, budget balanced) and the information/reward structure (symmetric vs asymmetric actions), and derive canonical and/or tractable optimal policies through duality.

\paragraph{The Persuasion Models.}

 In the Bayesian persuasion model, there is a \emph{receiver} who must select one of a number of actions, and a \emph{sender} looking to influence the receiver's choice in order to maximize her own expected payoff.  We adopt the perspective of the sender. A \emph{state of nature}, drawn from a common knowledge prior distribution, determines the payoff of each action to each of the sender and the receiver. The sender has an informational advantage over the receiver: access to the realization of the state of nature. The problem facing the sender is that of computing and committing to the optimal \emph{signaling scheme}: a randomized map from states of nature to signals. Once the state is drawn by nature, the signaling scheme is invoked and the corresponding signal is sent to the receiver; she then updates her prior belief and chooses the action maximizing her expected payoff.
The multi-receiver private Bayesian persuasion model generalizes the previous model to multiple receivers. There is still a common knowledge prior distribution over states of nature, and a single sender with an informational advantage. We restrict attention to the special case of two actions $\{0,1\}$ for each receiver. The state of nature now determines a set function for the sender and a set function for each of the receivers: each set function maps the set of receivers taking action $1$ to a payoff. %For one of our results, we assume that the set functions of the receivers satisfy a natural \emph{positive externality} property which we define formally.
 A signaling scheme now is a randomized map from states of nature to a signal for each receiver.

In both models, a simple revelation principle style argument shows that it suffices to restrict attention to schemes which are \emph{direct} and \emph{persuasive} (see e.g. \cite{KG-11,AB-16}). A direct scheme is one in which signals correspond to action recommendations. Such a scheme is persuasive if it is a Bayes-Nash equilibrium for each receiver to follow the recommendation.

\vspace{-3mm}
\paragraph{Adding Payments.}
%Payments: Natural model (pay for following recommendation). Payment identity. mention cheng li paper

We augment each model by allowing a special form of payment contract. In addition to committing to a direct signaling scheme, the sender also commits to a payment $p(i)$ for each action $i$. If the signaling scheme recommends action $i$, and the receiver follows the recommendation, she is then paid $p(i)$ by the sender (or pays the sender $-p(i)$ if $p(i) < 0$). If the receiver deviates from the scheme's recommendation, no payment is exchanged. % Note that the payment $p(i)$ depends only on the signaled action $i$, and not additionally on the state of nature; this guarantees that the payment is important so as not to reveal additional information beyond that revealed by the signaling scheme, and moreover is without loss.\footnote{A scheme which to depend on both the state of nature and the recommended action }
Since payments are exchanged only when the receiver follows the recommendation, nonnegative payments can be viewed as augmenting the ``persuasiveness'' of the signaling scheme. Negative payments are less natural, although their consideration will be technically instructive.

  We distinguish three classes of payment contracts: unrestricted (allowing arbitrary positive and negative payments), nonnegative, and budget balanced. For the latter, the sender's expected payment should be zero over states of nature and randomness in the signaling scheme, assuming the receiver(s) follow the recommendations of the scheme. Independent to our work, \citet{cheng_li} also considers adding payments to Bayesian persuasion. They analyze a special case of Bayesian persuasion with two states of nature, and examine how adding a payment contract influences the optimal policy in that scenario. Our approach diverges from this work by considering more general settings of Bayesian persuasion and various classes of payment contracts. 
  
\vspace{-3mm}
\paragraph{Duality as a Unifying Lens.}
%Duality: similarity between persuasion and MD, Myersonian perspective vs Borderian, canonical/simple rules in MD/AMD, lagrangify incentive constraints, 

Persuasion and auction design share striking parallels. Indeed, both are economic design problems in which the outputs --- recommendation(s) in the case of persuasion and allocations of goods in the case of an auction --- are subjected to incentive constraints which at the surface appear quite similar in the two settings. This is made explicit   by  \citet{DX-16}, who draw an analogy between persuasion and single-item auctions: actions are analogous to bidders, and recommending an action is analogous to allocating the item. Through this  analogy, they were able to leverage techniques from auction theory --- in particular Border's theorem~\citep{B-91} --- to characterize and compute optimal signaling schemes for Bayesian persuasion when action payoffs are i.i.d. This analogy is imperfect, however, as illustrated by the impossibility result of \citet{DX-16} for independent non-identical action payoffs, contrasting the tractability of single-item auctions with independent bidders. %The different incentive constraint the different nature of incentive constraints in contrast to the tractability of single item auctions with independent non-identical items.

Despite being imperfect, however, this similarity is suggestive: if a Border's theorem based approach  of optimization of interim rules can be applied to persuasion, why not the ``virtual value'' approach of \citet{Myerson81} as well? Myerson's approach can be viewed through the more general lens of Lagrangian duality, in particular as a consequence of Lagrangifying the incentive constraints. The duality-based approach has been applied to much more general mechanism design settings, producing (often approximate) generalizations of Myerson's virtual-value characterization which have led to simple and approximately optimal mechanisms in a number of multi-parameter settings (e.g. \cite{CDW-16, EdenFFTW17a, EdenFFTW17b, BrustleCWZ17, FULLT18, LiuP18, CaiZ17,haghpanah2015reverse}, and classic results such as \cite{rochet1998ironing}).  It is therefore natural that we embark on the same  exploration for persuasion, as well as for models (such as ours) which combine the approaches of persuasion and mechanism design. As a best case scenario, we can hope for ``simple'' characterizations of optimal or near-optimal schemes, akin to those derived from duality in mechanism design. Particularly attractive are ``canonical'' characterizations which depend minimally on the details of the instance at hand.

\vspace{-3mm}
\paragraph{Our Results.}
In Section~\ref{sec:lagrange}, we apply Lagrangian duality to Bayesian persuasion, and derive some elementary properties of the primal/dual pair which enable our results to follow. Our first main result is in Section~\ref{sec:nopayment}, and concerns (single-receiver) Bayesian persuasion with a prior distribution which is symmetric across $n$ actions\footnote{We refer the reader to~\cite{dughmi2017algorithmic,DX-16} for some natural examples of persuasion with symmetric actions.}, and no payments are allowed. We show that a single dual variable  naturally interpolates between two extreme schemes: at one extreme ($\lambda$ equals zero) we get the (non-persuasive) scheme which always recommends the sender's ex-post preferred action, at the other extreme ($\lambda$ very large) we get the (persuasive) scheme which recommends the receiver's ex-post preferred action. Intermediate values of $\lambda$ yield schemes which point-wise optimize the sender payoff plus $n\lambda$ times the receiver's payoff. Moreover, there is a threshold $\lambda^*$ below which the induced scheme is non-persuasive, and above which the scheme is persuasive. This $\lambda^*$ induces the sender-optimal persuasive signaling scheme. This characterization is detail-free, in the sense that it reduces the prior distribution to a relative weighting of receiver to sender payoff, in particular the minimum necessary for persuasiveness. Furthermore, this optimal scheme is \emph{Pareto efficient} in a strong (ex-post) sense:  for every state $\theta$, no outcome Pareto dominates the one picked by the scheme.

In Section~\ref{sec:payment}, we use duality to characterize Bayesian persuasion schemes with payments. When arbitrary payments are allowed and the prior is symmetric, the optimal signaling scheme is canonical and does not depend on the prior: it always (i.e. in every state of nature) recommends the action that maximizes the sender utility plus $\frac{n}{n-1}$ times the receiver's utility. Payments accompanying this scheme are computed easily via a simple payment identity. Our main result in this section is a \emph{dichotomy} for Bayesian persuasion with a symmetric prior, \emph{but} non-negative payments: the optimal scheme is either the same as the aforementioned arbitrary-payment scheme (in the event that non-negative payments are needed), or else is the optimal no-payment signaling scheme. Finally, with only two actions and an arbitrary prior, we show that when arbitrary payments are allowed, the optimal scheme always recommends the action maximizing sender utility plus twice receiver utility. Again, all of our optimal schemes are ex-post Pareto efficient. We note that the strongest positive results of~\cite{DX-16} (exact polynomial time solvability) hold in the setting of i.i.d. actions. Our results therefore extend and simplify theirs, while lending further insight.

In Section~\ref{sec:binary}, we turn our attention to a multi-receiver private persuasion model with externalities. Again, we employ duality to analyze the optimal scheme in this setting. Our first main result shows that, when we allow budget balanced payments, there exists an optimal scheme which is ``simple'' in the following sense: It always recommends an action maximizing a weighted sum of sender utility and the receivers' marginal utility from following the recommendation of the scheme. The relative weighting is  determined by a single dual variable.  This characterization is interesting when contrasted with the optimal no-payment scheme, which is not as simple in general, and we show is sometimes strictly outperformed (in terms of sender expected utility) by the optimal budget-balanced scheme. Our second main result is a generalization of an algorithmic result of \cite{DH-17} to multi-receiver persuasion with \emph{positive externalities}: when no payments are allowed, and sender and receiver utility functions lie in some cone of set functions $\mathcal C$, we use duality to exhibit a polynomial time reduction from optimal signaling to the optimization problem for set functions in $\mathcal C$. Due to the lack of space, details of this last result are in Appendix~\ref{sec: computation plus externalities}.

%\paragraph{Further Related Work.}
%
%\anote{todo}

%Rad: I don't think a further related work is even necessary. All the indirect related works have already been cited in the intor.

%\subsection*{Additional Discussion of Related Work}

%\subsection*{Organization}

%%% Local Variables:
%%% mode: latex
%%% TeX-master: "../main"
%%% End:

\newcommand{\ppolytope}{\mathcal{P}}
\newcommand{\polytope}{\mathcal{P}}
\newcommand{\nopayments}{ \{ \mathbf{0}^n \} }
\newcommand{\dual}[2]{\lambda(#1,#2)}
\newcommand{\nstate}{\theta}
\newcommand{\scheme}[1]{\phi_\nstate(#1)}
\newcommand{\signalS}{\Sigma}
\newcommand{\rpay}[1]{r_{\nstate}(#1)}
\newcommand{\spay}[1]{s_{\nstate}(#1)}
\newcommand{\dist}{\mu_\nstate}
\newcommand{\gpayment}[1]{p(#1)}
\newcommand{\sipay}[2]{\xi^{#1}_{#2}}
\newcommand{\ripay}[2]{\rho^{#1}_{#2}}
\newcommand{\siidpay}[1]{\xi_{#1}}
\newcommand{\riidpay}[1]{\rho_{#1}}
\newcommand{\siidpayv}{\pmb{\xi}}
\newcommand{\riidpayv}{\pmb{\rho}}
\newcommand{\disti}[1]{\mu_{\nstate_#1}}
\newcommand{\schemev}{\phi_\nstate}
\newcommand{\signv}{\mathcal{M}}
\newcommand{\signijk}[3]{M^{#1}_{#2#3}}
\newcommand{\signi}[1]{M^{#1}}
\newcommand{\spolytope}{\mathcal{P}_s}
\newcommand{\lagrange}{\lambda}
\newcommand{\lagrangeF}{\mathcal{L}}
\newcommand{\dualv}{\lambda}
\newcommand{\dualr}[1]{r^\dualv_{\nstate}(#1)}
\newcommand{\distn}{\mu}
\section{Preliminaries}
\label{sec:prelim}
%\subsection{Notations and basics of Bayesian persuasion}
%\label{sec:persuasion-basic}
\vspace{-3mm}
\paragraph{Bayesian persuasion.} Bayesian persuasion is a game between a  sender, also termed as the \emph{principal}, and a receiver, also termed as the \emph{agent}. There is also a set of possible \emph{states of nature} $\SSpace$. The true state of nature $\nstate\in \SSpace$ is drawn from a prior distribution $\dist$, known to both the sender and the receiver. The receiver has a set of possible \emph{actions} $[n] = \{ 1,\dots, n \}$ to pick. Based on the action $i$ picked by the receiver and the true state of nature $\nstate$,  the sender and the receiver gain \emph{payoffs}\footnote{ We use the words ``payoff" and ``reward" interchangeably in this paper.}, denoted by $\spay{i}$ and $\rpay{i}$ respectively. In the Bayesian persuasion game, the sender commits to a \emph{signaling scheme} $\phi$, where in general $\phi$ is a mapping from $\SSpace$ to distributions over possible \emph{signals},~$\signalS$. Then the sender observes $\nstate \sim \dist$, and sends a signal $\sigma\in\signalS$ to the receiver, where $\sigma\sim \phi_\theta$ for the observed $\theta$. Given signal $\sigma$, the receiver updates her belief about the state of nature $\nstate$, and selects an action $i_r$ that maximizes her expected payoff under this posterior distribution. In this paper, without loss of generality, and by applying the revelation principle~\citep{KG-11}, we focus on signaling schemes for which $\signalS=[n]$. We use the notation $\scheme{i}$ to denote the probability that the sender recommends action $i$ conditioned on the state of nature being $\nstate$. A signaling scheme is said to \emph{implement} $\schemev$, if it samples signal $i\sim \schemev$ given the state of nature $\nstate$.

\vspace{-5mm}
\paragraph{Persuasive signaling and optimal Bayesian persuasion.} A signaling scheme is \emph{persuasive} if the receiver is best off following the sender's recommendation, i.e. following the sender's recommendation maximizes the receiver's expected payoff under the receiver's posterior belief about the state of the nature (conditioned on the received signal).  In the \emph{optimal Bayesian persuasion problem}, we wish to find,  over all persuasive schemes, the one that maximizes the sender's expected payoff. % when the receiver follows her recommendation.

\vspace{-5mm}
\paragraph{Action types.}  As typical in information structure design, we frequently think of each action as having a ``type'' depending on the state of nature.\footnote{ We refer the reader to \cite{DX-16} for a list examples of Bayesian persuasions and how types are defined in those.} That is, $\nstate=[\nstate_1,\ldots,\nstate_n]$ is a vector in $[m]^n$, for some parameter $m$. Action $i$ has type $\nstate_i$, which completely determines the sender and receiver payoffs should action $i$ be selected, independent of $\nstate_{-i}$. More clearly, there exist $m$ pairs of payoffs $(\sipay{i}{1},\ripay{i}{1}),\ldots,(\sipay{i}{m},\ripay{i}{m})$ such that when the receiver selects action $i$ with $\nstate_i = j$, $\sipay{i}{j}$ is the payoff to the sender and $\ripay{i}{j}$ is the payoff to the receiver. Note that $\spay{i}=\sipay{i}{\nstate_i}$ and $\rpay{i}=\ripay{i}{\nstate_i}$, and for universality we stick to this notation (see Appendix~\ref{appendix:reduced form}). Distributions in this setting may be \emph{independent}, if $\dist = \underset{i}{\times}~\disti{i}$ is a product distribution, where each $\disti{i}$ is a distribution over $[m]$. Distributions in this setting may also be \emph{symmetric}, if $\dist$ is invariant under all permutations. Distributions that are both independent and symmetric are \emph{i.i.d}. 

\vspace{-5mm}
\paragraph{Bayesian persuasion with payments.} In this paper, we introduce a natural model of payments into the Bayesian persuasion problem. In addition to recommending an action $i$, the sender is allowed to ``incentivize'' the receiver to take the recommendation with an additional payment $\gpayment{i}$. Mathematically, it also makes sense to consider when payments are allowed to be negative (which corresponds to the sender ``charging'' the receiver $-\gpayment{i}$ in order to follow the recommendation). This certainly could be relevant for practice (e.g. if the sender/receiver can commit to contracts), but the non-negative payment model is clearly more natural. We study four different payment models: \emph{zero payments} (i.e. classic Bayesian persuasion), \emph{non-negative payments} (i.e. when the sender cannot charge the receiver), \emph{budget-balanced payments} (i.e. the payment of the sender is zero in expectation) and \emph{general payments} (i.e. payments are arbitrary real numbers). To unify the notation throughout the paper, we use ${\polytope}$ to denote the feasible set of payments, which plays the role of a different polytope for each relevant payment model (note that in all four models $\polytope$ is closed and convex). A signaling scheme is said to \emph{implement} $(\schemev,p)$ if given the observed state of nature $\nstate$, it samples $i\sim \schemev$ and pays the receiver a (randomized) payment $\gpayment{i}$. Throughout this paper, it will be much more convenient to focus on specifying the \emph{expected payments} for following the recommendation of action $i$, $P(i) \triangleq \sum_{\nstate\in\SSpace} \dist \scheme{i}\gpayment{i}$ for any scheme. We similarly define the implementability of $(\schemev,P)$.\footnote{ It is also easy to see how to implement payments $\gpayment{i}$ from $P(i)$ and an implementation of $\schemev$: for a given state $\nstate\in \SSpace$, sample a payment $\tfrac{P(i)}{\scheme{i}}$ whenever signal $i\sim \schemev$ is recommended.}

\paragraph{Payment identity and optimal payments.} We conclude with an observation about Bayesian persuasion with payments. Similar to auction design, there is a ``payment identity'' capturing which payments will make an implementable signaling scheme persuasive. In contrast to auctions, however, it is easy to see that \emph{every} signaling scheme can be made persuasive with sufficiently high payments. 
\vspace{-2mm}
\begin{observation}\label{obs:payments}
Let $\dist$ be any distribution over states of nature, and $\schemev$ be any signaling scheme (not necessarily persuasive). Then there exist thresholds $T_1,\ldots, T_n$ such that $(\schemev,p)$ is persuasive if and only if $P(i) \geq T_i$ for all $i$. 
\end{observation}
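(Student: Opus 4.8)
The plan is to directly compute the receiver's incentive constraints and show that each one, after fixing the signaling scheme $\schemev$, reduces to a simple lower bound on the expected payment $P(i)$. First I would recall what persuasiveness means in the presence of payments: conditioned on receiving recommendation $i$, the receiver compares the expected payoff of following the recommendation (her action payoff $\ripay{i}{\nstate_i}$ plus the payment she collects) against the expected payoff of deviating to any alternative action $j \neq i$ (for which she receives $\ripay{j}{\nstate_j}$ but no payment, since payments are exchanged only when the receiver follows the recommendation). Writing the posterior over states conditioned on signal $i$ as proportional to $\dist \scheme{i}$, the constraint that recommendation $i$ is best for the receiver becomes a collection of inequalities, one per deviation target $j$.

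\medskip

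The key observation is that, after clearing the denominator of the posterior (i.e. multiplying through by the probability $\sum_\nstate \dist\scheme{i}$ that $i$ is recommended), each such inequality is \emph{linear and monotone increasing} in the payment the receiver expects to collect when following recommendation $i$. Concretely, the constraint ``following $i$ beats deviating to $j$'' takes the form
\begin{equation*}
\sum_{\nstate} \dist \scheme{i}\, \ripay{i}{\nstate_i} + P(i) \;\geq\; \sum_{\nstate} \dist \scheme{i}\, \ripay{j}{\nstate_j},
\end{equation*}
where the crucial point is that the payment $P(i)$ appears only on the left, with coefficient $+1$, while everything else on both sides is fixed once $\schemev$ and $\dist$ are fixed. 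Rearranging, this is exactly $P(i) \geq T_{ij}$ for the constant $T_{ij} \triangleq \sum_{\nstate}\dist\scheme{i}\big(\ripay{j}{\nstate_j} - \ripay{i}{\nstate_i}\big)$, which does not depend on any of the payments. Persuasiveness of recommendation $i$ is the conjunction of these over all $j \neq i$, so it is equivalent to $P(i) \geq T_i$ where $T_i \triangleq \max_{j \neq i} T_{ij}$. Since the constraints for distinct recommended actions $i$ involve disjoint payment variables $P(i)$, the scheme is globally persuasive if and only if $P(i) \geq T_i$ for every $i$, which is the claimed statement.

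\medskip

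I anticipate the only subtle point — and the step I would treat most carefully — is the bookkeeping of the payment term under deviation: one must use the modeling assumption, stated in the ``Adding Payments'' paragraph, that a deviating receiver exchanges \emph{no} payment, so that $P(i)$ enters the ``follow'' side but contributes nothing to any ``deviate'' side. This is what guarantees the $+1$ coefficient on $P(i)$ and hence that the thresholds $T_i$ are genuinely independent of the payments (if deviators were also paid, the constraint could fail to be monotone or even well-defined as a threshold). A secondary technicality is handling the degenerate case where action $i$ is recommended with probability zero under $\schemev$: then the posterior is undefined and the constraint is vacuous, so one sets $T_i = -\infty$ (equivalently, imposes no constraint), and the equivalence still holds trivially. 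Everything else is the routine rearrangement of a finite system of linear inequalities, and no further machinery is needed.
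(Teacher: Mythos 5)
Your proof is correct and follows essentially the same route as the paper's: after clearing the posterior's denominator you write each persuasiveness constraint as linear in $P(i)$ with unit coefficient and set $T_i = \max_{j \neq i} T_{ij}$, which is precisely the paper's definition $T_i = \max_{j \neq i} \{X_{\phi}(i,j) - X_{\phi}(i,i)\}$ with the expected utilities $X_{\phi}(i,j)$ written out explicitly. Your extra remarks (the no-payment-on-deviation modeling assumption and the zero-probability recommendation case) are harmless elaborations of the same argument, the latter being handled automatically once the constraints are in cleared, unconditional form.
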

\begin{proof}
Let $X_{\phi}(i,j)$ be the receiver's expected utility by taking action $j$ when the scheme $\schemev$ recommends action $i$. Define $T_i = \max_{j \neq i} \{X_{\phi}(i,j) - X_{\phi}(i,i)\}$. Therefore, $X_{\phi}(i,i) + T_i \geq X_{\phi}(i,j)$ for all $j$, and any scheme that pays $P(i) \geq T_i$ is certainly persuasive when recommending $i$. Moreover, if $P(i) < T_i$, then there exists some $j$ s.t. $X_{\phi}(i,i) + P(i) < X_{\phi}(i,j)$, and the scheme is not persuasive. 
\end{proof}

\vspace{-3mm}
\begin{definition} We refer to the $T_1,\ldots, T_n$ guaranteed in Observation~\ref{obs:payments} as the \emph{optimal payments} for $\phi$, and $\max\{0,T_1\},\ldots, \max\{0,T_n\}$ as the \emph{optimal non-negative payments} for $\phi$.
\end{definition}

\vspace{-6mm}
\paragraph{Bayesian persuasion with multiple agents, binary actions and externalities.} In this paper, we formalize a model with externalities, naturally extending the model introduced recently in~\cite{AB-16} and \cite{DH-17} for private persuasion with no externalities.  Consider the general setup of Bayesian persuasion with one sender and  $\agents$ receivers, where each receiver can take either action $0$ or action $1$. Using the revelation principle for Bayesian persuasion, we can restrict our attention to direct signaling schemes, where a direct (randomized) scheme can be thought of as a mapping from $\SSpace$ to distributions over subsets of $[\agents]$, indicating to which receivers action $1$ is recommended. We denote such a scheme by $\BScheme{S}$ for every $\theta \in \SSpace, S\subseteq [\agents]$. Given the subset $S$ of receivers taking action $1$, let $\util{i}{S}$ be the payoff of receiver $i$ and $\Ffun{S}$ be sender's payoff. We further assume that $\Ffun{S}$ is a monotone set function. We also allow payments of the form $\{\pay{i}{a}\}_{i\in[\agents], a\in\{0,1\}}$, where $\pay{i}{a}$ is the payment agent $i$ receives by following action $a$, if it is recommended. We occasionally use the notation $\Gfun{i}{S}\triangleq \util{i}{S}-\util{i}{S\setminus\{i\}}$, which are called \emph{marginal utilities} in this paper. For our computational results, we further assume \emph{positive externalities}, i.e. for every agent $i$, $\Gfun{i}{S}$ can only increase if agent $j\neq i$ switches from action $0$ to $1$.

\vspace{-2mm}
\section{Lagrangian Duality and Bayesian persuasion}
\label{sec:lagrange}
In this section, we create a unified analysis toolbox for various Bayesian persuasion problems through the lens of LP duality. Specifically, we use Lagrangian duals to reveal the structures of the optimal signaling, \'a la successful instances of a similar technique in Bayesian mechanism design~\citep{CDW-16}. As a prerequisite, we heavily use linear programming techniques introduced in~\cite{KG-11}, and further improved in \cite{DX-16}, and build a bedrock for the analyses in future sections.

\vspace{-3mm}
\subsection{LP for general Bayesian persuasion with payments} Linear programming formulations of Bayesian persuasion without payments have been introduced in numerous prior works (e.g.~\cite{DX-16}). We now modify the program slightly to capture payments, by observing that whenever action $i$ is recommended the receiver gets additional payoff $\gpayment{i}$ and the sender loses a payoff  $\gpayment{i}$. The following program solves the sender's optimization problem. 
\begin{align*}
\max~~&\sum_{\nstate \in \SSpace} \sum_{i \in [n] } \dist \scheme{i} \left( \spay{i}- \gpayment{i} \right) &\\
&\sum_{\nstate \in \SSpace} \dist\scheme{i} \left( \rpay{i} +\gpayment{i} \right) \geq \sum_{\nstate \in \SSpace} \dist \scheme{i} \rpay{j}, & \forall  i,j \neq i \in [n] \\
&\schemev\in \Delta_n,~\forall \nstate\in\SSpace,~~~\text{and}~~~{p} \in \polytope
\end{align*}

The first set of constraints, also called \emph{persuasiveness constraints}, are similar to \emph{Incentive Compatibility (IC)} constraints from auction design, and  ensure that the receiver is best off (in expectation) by following the recommendation and paying the payment. The rest ensure that the scheme is in fact a valid distribution over recommendations, and the payments are feasible. While not yet a linear program, the above program can be made linear by a simple change of variables to expected payments $P(i)$:\footnote{ Recall that $P(i) = \sum_{\nstate \in \SSpace} \dist \phi_{\nstate}(i) \gpayment{i}$.}
\begin{talign*}
\max~~&\sum_{\nstate \in \SSpace} \sum_{i \in [n] } \dist \scheme{i}\spay{i}- \sum_{i\in[n]} P(i)  &\customlabel{eq:Lp-general-payments}{\textit{(LP-General-with-Payments)}}\\
&P(i)+\sum_{\nstate \in \SSpace} \dist\scheme{i} \rpay{i}  \geq \sum_{\nstate \in \SSpace} \dist \scheme{i} \rpay{j}, & \forall  i,j \neq i \in [n] \\
&\schemev\in \Delta_n,~\forall \nstate\in\SSpace,~~~\text{and}~~~P \in \polytope
\end{talign*}
where we abuse the notation by using $\polytope$ to denote the feasible polytope of average payments $P$.
\vspace{-3mm}
\subsection{The partial Lagrangian dual}
\label{sec:partial lagrangian}

 One of the main tools we use in this paper is taking the \emph{partial Lagrangians} of the linear programs of the Bayesian persuasion problem. Basically, similar to \cite{CDW-16}, we do not take a ``complete dual'' of the LP formulation, instead ``Lagrangifying'' only the persuasiveness constraints, and leave all feasibility constraints in the primal.\footnote{ There are significant conceptual differences between ``incentive constraints'' in auction design and ``persuasiveness constraints''.
Some of these differences were briefly discussed in Section~\ref{sec:intro}; we will highlight them further in future technical sections.}  To take the partial Lagrangians, we apply the method of Lagrangian multipliers by (\emph{a}) introducing dual variables $\dual{i}{j}$ for every pair of actions $i,j \in [n]$ with $i \neq j$, (\emph{b}) multiplying the persuasiveness constraints with duals $\dual{i}{j}$,  and (\emph{c}) moving the persuasiveness constraints to the objective. By rearranging the terms we have the following observation.
\vspace{-2mm}
\begin{observation}\label{obs:lagrangians} Assigning dual variables $\dual{i}{j}$ to the persuasiveness constraint guaranteeing that the receiver prefers to take action $i$ over action $j$ when $i$ is recommended gives the following partial Lagrangian.
\begin{align}
\lagrangeF_{\dualv} (\phi,P) = \sum_{\nstate\in\SSpace,i \in [n]}\dist\scheme{i} &\left( \spay{i} + \rpay{i} \sum_{j \neq i} \dual{i}{j} - \sum_{j \neq i} \dual{i}{j} \rpay{j} \right) + \sum_{i\in[n]} P(i) \left( \sum_{j \neq i} \dual{i}{j} - 1 \right). \label{eq:Lagrange equation:main}
\end{align}
\end{observation}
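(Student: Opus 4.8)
The plan is to form the standard partial Lagrangian straight from its definition and then regroup the terms by the two families of primal variables, $\scheme{i}$ and $P(i)$. First I would rewrite each persuasiveness constraint of \ref{eq:Lp-general-payments} in the homogeneous form $g_{ij}(\phi,P)\geq 0$, where $g_{ij}(\phi,P)=P(i)+\sum_{\nstate\in\SSpace}\dist\scheme{i}\rpay{i}-\sum_{\nstate\in\SSpace}\dist\scheme{i}\rpay{j}$. Assigning the multiplier $\dual{i}{j}$ to the $(i,j)$ constraint and moving it into the objective then gives, by definition of the partial Lagrangian, $\lagrangeF_{\dualv}(\phi,P)=\sum_{\nstate,i}\dist\scheme{i}\spay{i}-\sum_{i}P(i)+\sum_{i}\sum_{j\neq i}\dual{i}{j}\,g_{ij}(\phi,P)$. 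The entire content of the observation is that this expression equals the claimed form in~\eqref{eq:Lagrange equation:main}, so the proof is a direct expansion and collection of terms.

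Next I would expand the double sum $\sum_{i}\sum_{j\neq i}\dual{i}{j}\,g_{ij}(\phi,P)$ into its three pieces: the payment contribution $\sum_{i}\sum_{j\neq i}\dual{i}{j}P(i)$, the own-reward contribution $\sum_{i}\sum_{j\neq i}\dual{i}{j}\sum_{\nstate}\dist\scheme{i}\rpay{i}$, and the deviation-reward contribution $-\sum_{i}\sum_{j\neq i}\dual{i}{j}\sum_{\nstate}\dist\scheme{i}\rpay{j}$. The payment terms combine with the $-\sum_{i}P(i)$ from the objective to yield $\sum_{i}P(i)\bigl(\sum_{j\neq i}\dual{i}{j}-1\bigr)$, which is exactly the second summand of~\eqref{eq:Lagrange equation:main}. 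The three signaling contributions all share the common weight $\dist\scheme{i}$, so grouping by state $\nstate$ and recommended action $i$ produces $\sum_{\nstate,i}\dist\scheme{i}\bigl(\spay{i}+\rpay{i}\sum_{j\neq i}\dual{i}{j}-\sum_{j\neq i}\dual{i}{j}\rpay{j}\bigr)$, matching the first summand.

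The only step requiring care — and the single place where a sign or index slip could occur — is the deviation-reward term: here the inner sum $\sum_{\nstate}\dist\scheme{i}\rpay{j}$ is weighted by the probability $\scheme{i}$ of recommending $i$, while the reward $\rpay{j}$ is evaluated at the deviation target $j$. One must group this quantity by the recommended action $i$ (not by $j$) so that the factor $\dist\scheme{i}$ is correctly pulled out and the residual $-\sum_{j\neq i}\dual{i}{j}\rpay{j}$ lands inside the per-$i$ parenthesis. Everything else is routine bookkeeping; no optimization or duality machinery beyond the mechanical definition of the partial Lagrangian is needed, and the identity holds as an equality of expressions (valid for arbitrary, not necessarily nonnegative, multipliers).
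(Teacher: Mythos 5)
Your proposal is correct and follows exactly the route the paper intends: the paper states the observation as an immediate consequence of multiplying each persuasiveness constraint of \ref{eq:Lp-general-payments} by $\dual{i}{j}$, adding it to the objective, and rearranging, which is precisely your expansion of $\sum_{i}\sum_{j\neq i}\dual{i}{j}\,g_{ij}(\phi,P)$ and regrouping by $\scheme{i}$ and $P(i)$. Your flagged subtlety --- that the deviation reward $\rpay{j}$ is weighted by $\dist\scheme{i}$ and must be grouped under the recommended action $i$ --- is exactly the right point of care, and the rest is the same routine bookkeeping the paper elides.
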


\noindent We conclude by reminding the reader that the optimal signaling schemes and payments are solutions to the following min-max programs by applying strong duality ($\mathcal{D}$ denotes the appropriate dual feasible polytope):
\begin{align}
\max_{\forall \theta:\schemev\in\Delta_n, P\in\polytope}\left(\min_{\dualv\in \mathcal{D}}\lagrangeF_{\dualv} (\phi,P)\right)=\min_{\dualv\in \mathcal{D}}\left(\max_{\forall \theta:\schemev\in\Delta_n, P\in\polytope}\lagrangeF_{\dualv} (\phi,P)\right).\label{eq:strongduality}
\end{align}

\begin{definition}[\emph{Dual-adjusted receiver payoff}] For any assignment of dual variables $\dualv\in\mathcal{D}$, define $\dualr{i} \triangleq \rpay{i} \sum_{j \neq i} \dual{i}{j} - \sum_{j \neq i} \dual{i}{j} \rpay{j}$.
\end{definition}
We conclude by Proposition~\ref{prop:duality}, which we repeatedly use in Sections~\ref{sec:nopayment},~\ref{sec:payment} and~\ref{sec:binary} to extract various properties of the corresponding optimal policy. 

\begin{proposition}\label{prop:duality}[\emph{Strong Duality for Bayesian Persuasion}] There exist dual variables $\dualv(.,.)$ such that the optimal signaling scheme, for every state of nature $\theta$, recommends the action maximizing the dual-adjusted receiver payoff, i.e. $\spay{i} + \dualr{i}$. Moreover, if $\dual{i}{j} > 0$, then when action $i$ is recommended, the receiver is indifferent between following the recommendation and taking action $j$ instead (Complementary Slackness).\footnote{ We will not actually make use of complementary slackness in this paper, but include it here for completeness.} 
\end{proposition}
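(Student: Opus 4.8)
The plan is to derive Proposition~\ref{prop:duality} directly from the min-max characterization in Equation~\eqref{eq:strongduality} together with the explicit form of the partial Lagrangian in Equation~\eqref{eq:Lagrange equation:main}. First I would invoke strong duality to fix an optimal dual assignment $\dualv$ achieving the outer minimum on the right-hand side of~\eqref{eq:strongduality}, so that the optimal primal pair $(\phi,P)$ is a maximizer of $\lagrangeF_{\dualv}(\phi,P)$ over the feasible set $\{\forall\theta:\schemev\in\Delta_n,\ P\in\polytope\}$. The central observation is that, holding $\dualv$ fixed, the Lagrangian~\eqref{eq:Lagrange equation:main} decomposes: the term $\sum_{\nstate,i}\dist\scheme{i}(\spay{i}+\dualr{i})$ depends on $\phi$ only, and since each $\schemev$ lies in the simplex $\Delta_n$ independently across states $\theta$, maximizing over $\phi$ amounts to a separate linear optimization over $\Delta_n$ for each $\theta$.

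The key step is then to recognize the coefficient of $\dist\scheme{i}$ in the $\phi$-dependent part. Expanding $\dualr{i}=\rpay{i}\sum_{j\neq i}\dual{i}{j}-\sum_{j\neq i}\dual{i}{j}\rpay{j}$ (the Dual-adjusted receiver payoff), the coefficient is exactly $\spay{i}+\dualr{i}$. Because $\dist\geq 0$ is a fixed constant and the maximization of a linear functional over the simplex $\Delta_n$ is attained by placing all mass on a coordinate of maximum coefficient, the optimal scheme for each state $\theta$ must put weight only on actions $i$ maximizing $\spay{i}+\dualr{i}$. This establishes the first assertion: for every $\theta$, the optimal scheme recommends an action maximizing the dual-adjusted receiver payoff $\spay{i}+\dualr{i}$.

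For the complementary slackness claim, I would appeal to the standard KKT/complementary-slackness relationship between the optimal primal and dual solutions of the underlying linear program~\eqref{eq:Lp-general-payments}. Specifically, whenever a dual multiplier $\dual{i}{j}$ on a persuasiveness constraint is strictly positive, that constraint must hold with equality at the optimum; unpacking the constraint $P(i)+\sum_{\nstate}\dist\scheme{i}\rpay{i}\geq\sum_{\nstate}\dist\scheme{i}\rpay{j}$ with equality says precisely that, conditioned on action $i$ being recommended, the receiver's expected payoff from following the recommendation (inclusive of payment) equals that from deviating to $j$, i.e.\ the receiver is indifferent.

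The main obstacle I anticipate is not any single calculation but rather carefully justifying the application of strong duality and the interchange of max and min for this particular (possibly infinite, if $\SSpace$ is large, but finite-dimensional) linear program. I would need the feasible region to be nonempty (which follows from Observation~\ref{obs:payments}, since payments can always make a scheme persuasive) and the optimum to be attained, so that strong duality~\eqref{eq:strongduality} holds with no duality gap and the optimal duals exist. Once strong duality is in hand, the decomposition over states $\theta$ and the simplex-maximization argument are routine, so the real content is packaging the LP duality correctly; the rest follows by reading off coefficients from~\eqref{eq:Lagrange equation:main}.
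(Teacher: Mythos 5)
Your proposal is correct and follows essentially the same route as the paper's own proof: fix the optimal dual via strong duality (Equation~\ref{eq:strongduality}), observe that the Lagrangian in Equation~\ref{eq:Lagrange equation:main} splits into a $P$-dependent term and a $\phi$-dependent term whose coefficients are exactly $\spay{i}+\dualr{i}$, maximize the latter state-by-state over the simplex, and read off complementary slackness from LP optimality. Your added care about feasibility (via Observation~\ref{obs:payments}) and attainment is a reasonable refinement but does not change the argument.
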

\begin{proof}

This is is an immediate consequence of strong duality, plus some observations about the structure of the Bayesian Persuasion LP. Observe first that indeed $\lagrangeF_{\dualv}(\phi,P) =\sum_{\nstate\in\SSpace,i \in [n]}\dist\scheme{i}\cdot (\spay{i} + \dualr{i}) +  \sum_{i\in[n]} P(i) ( \sum_{j \neq i} \dual{i}{j}- 1 )$. Now, consider the RHS of Equation~\ref{eq:strongduality}. For given dual variables $\dualv$, observe that the right-most term (depending on $P$) does not depend on $\phi$ at all. Observe that the remaining term ($\sum_{\nstate\in\SSpace,i \in [n]}\dist\scheme{i}\cdot (\spay{i} + \dualr{i})$) is trivial to optimize over all $\phi$ such that $\phi_{\theta} \in \Delta_n$ for all $\theta \in \Theta$: simply recommend the action maximizing $\spay{i} + \dualr{i}$ for all $\theta \in \Theta$. Therefore, when $\dualv$ are the optimal dual variables, the optimal signaling scheme must recommend the action maximizing $\spay{i} + \dualr{i}$ on all states of nature. Complementary slackness follows immediately from strong duality.
\end{proof}

\subsection{Exploiting symmetries}
When viewing actions by their types (recall this means that we view states of nature as a profile $[\nstate_1,\ldots, \nstate_n]$, with each $\nstate_i \in [m]$), some of our results consider symmetric settings, where $\mu_{\nstate_1,\ldots, \nstate_n} = \mu_{\nstate_{\pi(1)},\ldots,\nstate_{\pi(n)}}$ for any permutation $\pi:[n]\rightarrow [n]$. It is well-known that symmetric LPs admit symmetric solutions (e.g.~\cite{HB-09}), and this fact has indeed been exploited in prior work on signaling and auctions~\citep{DX-16,CDW-12}. The proof of this is straight-forward, but we sketch it below (and provide a full proof in Appendix~\ref{sec:missing-proofs}). First, we state clearly what we mean by symmetric solutions.

\begin{definition}We say that a signaling scheme $(\phi, P)$ is symmetric if $P(i) = P(j)$ for all $i,j$ and for all permutations $\pi$, we have $\phi_{[\nstate_1,\ldots, \nstate_n]}(i) = \phi_{{[\nstate_{\pi(1)},\ldots,\nstate_{\pi(n)}]}}(\pi^{-1}(i))$ for all $i, \nstate$. We say that a dual solution $\dualv$ is symmetric if $\dualv(i,j) = \dualv(k,\ell)$ for all $i,j, k,\ell$. 
\end{definition}

\begin{proposition}\label{prop:symmetry} Let $\dist$ be a symmetric instance of Bayesian Persuasion, with any of the four referenced constraints on payments (none, non-negative, budget-balanced, or arbitrary). Then there exists an optimal symmetric primal and an optimal symmetric dual for $\dist$.
\end{proposition}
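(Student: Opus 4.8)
The plan is to prove this via a standard symmetrization (averaging) argument, exploiting the convexity of the feasible region and the linearity of the objective. First I would establish the group-theoretic setup: the symmetric group $S_n$ acts on instances by permuting action indices, and since $\dist$ is symmetric ($\mu_{[\nstate_1,\ldots,\nstate_n]} = \mu_{[\nstate_{\pi(1)},\ldots,\nstate_{\pi(n)}]}$), this action maps the feasible region of \ref{eq:Lp-general-payments} to itself and preserves the objective value. Concretely, for a permutation $\pi$ and a primal solution $(\phi,P)$, I would define the permuted solution $(\phi^\pi, P^\pi)$ by $\phi^\pi_{[\nstate_1,\ldots,\nstate_n]}(i) = \phi_{[\nstate_{\pi(1)},\ldots,\nstate_{\pi(n)}]}(\pi^{-1}(i))$ and $P^\pi(i) = P(\pi^{-1}(i))$, and verify that $(\phi^\pi,P^\pi)$ is feasible whenever $(\phi,P)$ is, with the same objective value. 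The key checks are (a) that the payment feasibility set $\polytope$ is invariant under permuting coordinates (true for all four payment models, since each constraint---nonnegativity, budget balance as a symmetric sum, etc.---is symmetric in the indices), and (b) that the persuasiveness constraints are permuted among themselves, using the symmetry of $\dist$ to re-index the sum over $\SSpace$.

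Next I would average. Given an optimal solution $(\phi^*, P^*)$, form the symmetrized solution $(\bar\phi, \bar P) = \frac{1}{n!}\sum_{\pi \in S_n} (\phi^{*\pi}, P^{*\pi})$. By convexity of the feasible region (each of $\Delta_n$ and $\polytope$ is convex, and the persuasiveness constraints are linear) the averaged solution is feasible; by linearity of the objective it attains the same (hence optimal) value. It then remains to check that $(\bar\phi,\bar P)$ satisfies the symmetry conditions in the Definition, i.e. $\bar P(i) = \bar P(j)$ for all $i,j$ (immediate, since averaging over all of $S_n$ makes every coordinate of $\bar P$ equal to $\frac{1}{n}\sum_i P^*(i)$) and $\bar\phi_{[\nstate_1,\ldots,\nstate_n]}(i) = \bar\phi_{[\nstate_{\sigma(1)},\ldots,\nstate_{\sigma(n)}]}(\sigma^{-1}(i))$ for every $\sigma$. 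This last identity follows from the fact that left-translation by $\sigma$ permutes the summation index $\pi$ over $S_n$ bijectively, so the averaged object is invariant under the action---this is the routine but slightly fiddly re-indexing step. The identical argument applied to the dual LP (whose feasible polytope $\mathcal D$ is likewise permutation-invariant and whose objective is linear) yields an optimal symmetric dual $\bar\dualv$ with $\bar\dualv(i,j) = \bar\dualv(k,\ell)$.

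The main obstacle, and the only place requiring genuine care, is verifying claim (a)/(b) above: that the group action genuinely preserves feasibility and objective for the \emph{persuasiveness} constraints across all four payment regimes simultaneously. The persuasiveness constraint relabels correctly only because summing over the symmetric prior $\dist$ and changing variables $\nstate \mapsto \pi(\nstate)$ leaves the relevant sums $\sum_\nstate \dist \phi^\pi_\nstate(i)\rpay{\cdot}$ intact up to a permutation of the constraint indices $(i,j)$; one must confirm that the constraint ``receiver prefers $i$ over $j$ when $i$ is recommended'' maps to ``receiver prefers $\pi(i)$ over $\pi(j)$,'' which is again a valid constraint in the system. Once this invariance is in place, everything else is the textbook symmetric-LP averaging lemma (cf.~\cite{HB-09,DX-16,CDW-12}), so I would state the invariance carefully and relegate the full verification to Appendix~\ref{sec:missing-proofs}, exactly as the authors signal they intend to do.
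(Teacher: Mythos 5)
Your proposal is correct and follows essentially the same route as the paper: permute an optimal primal/dual pair by the action of $S_n$ (using symmetry of $\dist$ to verify that feasibility, persuasiveness, and objective value are preserved), then symmetrize by averaging over all permutations, which the paper phrases equivalently as sampling $\pi$ uniformly at random before implementing $(\pi(\phi),\pi(P))$. The only cosmetic difference is that the paper verifies the permuted dual's optimality via the Lagrangian and complementary slackness rather than by invoking invariance of the dual LP directly, but the underlying argument is identical.
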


\begin{proof}[Proof sketch]
The proof is essentially a formalization of the following intuition. First, if a scheme $\phi$ is optimal and persuasive, and $\dist$ is symmetric, then we can first ``relabel'' the actions according to any permutation $\pi$ before implementing $\phi$, and it will still be optimal and persuasive. Second, any distribution over optimal, persuasive schemes is still optimal and persuasive. And third, if we consider the scheme that randomly samples a permutation $\pi$ with which to relabel the actions before implementing $\phi$, then this scheme is symmetric. A full proof appears in Appendix~\ref{sec:missing-proofs}.
\end{proof}

With Proposition~\ref{prop:symmetry} in hand, we can now draw further conclusions regarding the format of the Lagrangian function $\lagrangeF$ in the special case that $\dist$ is symmetric:
\begin{corollary}\label{cor:symmetryLagrange} When $\dist$ is symmetric, there exists a constant $\lagrange$ such that the optimal scheme, on every state of nature $\nstate$, selects an action $i$ maximizing $\spay{i} + n\lagrange \rpay{i}$. Moreover, for the same $\lagrange$ and some constant $C$, the Lagrangian takes the following form:
\[ \lagrangeF_{\lagrange}(\phi,P) = \sum_{\nstate\in\SSpace,i \in [n]}\dist\scheme{i} \left( \spay{i} + n \lagrange \rpay{i}  \right) + \sum_{i\in[n]} P(i) \left( (n-1) \lagrange - 1 \right) - \lagrange C. \]
\end{corollary}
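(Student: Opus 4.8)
The plan is to combine Proposition~\ref{prop:symmetry} with Proposition~\ref{prop:duality} and then carry out the single substitution that a symmetric dual forces. Since $\dist$ is symmetric, Proposition~\ref{prop:symmetry} supplies an \emph{optimal} dual $\dualv$ that is symmetric, meaning every off-diagonal entry equals one common value; call it $\lagrange$, so that $\dual{i}{j} = \lagrange$ for all $i \neq j$. Because this is an optimal dual, the conclusion of Proposition~\ref{prop:duality} applies to it verbatim: on each state $\nstate$, the optimal scheme recommends the action maximizing $\spay{i} + \dualr{i}$. Everything then reduces to evaluating $\dualr{i}$ and the Lagrangian of Observation~\ref{obs:lagrangians} at this one-parameter dual.

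First I would compute $\dualr{i}$ under $\dual{i}{j} = \lagrange$. Since $\sum_{j \neq i}\dual{i}{j} = (n-1)\lagrange$ and $\sum_{j \neq i}\dual{i}{j}\rpay{j} = \lagrange\sum_{j \neq i}\rpay{j} = \lagrange\big(\sum_{j}\rpay{j} - \rpay{i}\big)$, the definition collapses to $\dualr{i} = n\lagrange\rpay{i} - \lagrange\sum_{j}\rpay{j}$. The key observation is that the trailing term $\lagrange\sum_{j}\rpay{j}$ depends only on the state $\nstate$, not on the recommended action $i$, so it is an additive constant inside the maximization over actions. Hence maximizing $\spay{i} + \dualr{i}$ is equivalent to maximizing $\spay{i} + n\lagrange\rpay{i}$, which is exactly the first claim of the corollary.

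The Lagrangian form follows from the same substitution into Observation~\ref{obs:lagrangians}. The payment coefficient $\sum_{j \neq i}\dual{i}{j} - 1$ becomes $(n-1)\lagrange - 1$, and the bracketed receiver contribution becomes $n\lagrange\rpay{i} - \lagrange\sum_{j}\rpay{j}$ as computed. It then remains to aggregate the $-\lagrange\sum_{j}\rpay{j}$ piece into a single global constant. This is where I would invoke $\schemev \in \Delta_n$: since $\sum_{i}\scheme{i} = 1$ for every $\nstate$, the contribution $-\lagrange\sum_{\nstate,i}\dist\scheme{i}\sum_{j}\rpay{j}$ factors as $-\lagrange\sum_{\nstate}\dist\sum_{j}\rpay{j}$, a quantity independent of both $\phi$ and $P$. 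Setting $C = \sum_{\nstate \in \SSpace}\dist\sum_{j \in [n]}\rpay{j}$ produces exactly the stated $-\lagrange C$ term and completes the derivation.

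I do not expect a genuine obstacle here, as the statement is essentially a bookkeeping consequence of symmetry. The only two points requiring care are (i) justifying that the symmetric dual from Proposition~\ref{prop:symmetry} may legitimately be fed into Proposition~\ref{prop:duality}, which is sound precisely because both propositions speak of \emph{optimal} duals, and (ii) the small algebraic step of using $\sum_{i}\scheme{i} = 1$ to convert the action-independent (but state-dependent) quantity $\sum_{j}\rpay{j}$ into a true constant. The main conceptual content is simply recognizing that symmetry forces all duals to one scalar $\lagrange$, after which the $\sum_j \rpay{j}$ term is inert for both the action recommendation and the objective.
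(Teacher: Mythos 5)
Your proposal is correct and follows essentially the same route as the paper's proof: take the optimal symmetric dual from Proposition~\ref{prop:symmetry}, substitute $\dual{i}{j}=\lagrange$ into the Lagrangian of Observation~\ref{obs:lagrangians}, absorb the action-independent term $\lagrange\sum_{j}\rpay{j}$ into the constant $C$ via $\sum_i \scheme{i}=1$, and conclude the pointwise characterization from strong duality. Your explicit appeal to Proposition~\ref{prop:duality} for the first claim is just a reordering of the paper's argument (which derives the simplified Lagrangian first and then reads off the maximizer), and your careful justification of feeding the symmetric optimal dual into that proposition is exactly the step the paper leaves implicit.
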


\begin{proof}
Consider an optimal and symmetric dual solution, which is guaranteed to exist by Proposition~\ref{prop:symmetry}, in which $\dualv(i,j) = \lagrange$ for all $i \neq j$. Then we get the following simplified form for $\lagrangeF$:
\begin{talign*}
\lagrangeF_{\dualv}& (\phi,P) = \sum_{\nstate\in\SSpace,i \in [n]}\dist\scheme{i} \left( \spay{i} + \rpay{i} \sum_{j \neq i} \lagrange - \sum_{j \neq i} \lagrange \rpay{j} \right) + \sum_{i\in[n]} P(i) \left( \sum_{j \neq i} \lagrange - 1 \right) \\
&= \sum_{\nstate\in\SSpace,i \in [n]}\dist\scheme{i} \left( \spay{i} + \rpay{i} (n-1) \lagrange - \lagrange \sum_{j \neq i}  \rpay{j} \right) + \sum_{i\in[n]} P(i) \left( (n-1) \lagrange - 1 \right) \\
&= \sum_{\nstate\in\SSpace,i \in [n]}\dist\scheme{i} \left( \spay{i} + n \lagrange \rpay{i}  \right) + \sum_{i\in[n]} P(i) \left( (n-1) \lagrange - 1 \right) - \sum_{\nstate\in\SSpace,i \in [n]}\dist\scheme{i} \lagrange \sum_{j \in [n]}  \rpay{j} \\
&= \sum_{\nstate\in\SSpace,i \in [n]}\dist\scheme{i} \left( \spay{i} + n \lagrange \rpay{i}  \right) + \sum_{i\in[n]} P(i) \left( (n-1) \lagrange - 1 \right) - \lagrange C,
\end{talign*}
Where we have defined $C = \sum_{\nstate \in \SSpace} \dist \phi_{\nstate} \sum_{j \in [n]} \rpay{j}$. It is now easy to see that the signaling scheme maximizing $\lagrangeF_{\dualv}$ necessarily on every state of nature $\theta$ recommends an action maximizing $\spay{i} + n\lagrange \rpay{i}$. 
\end{proof}

\section{Illustrative Examples}\label{sec:examples}

Here, we work through two illustrative examples to highlight the role of different payment models in Bayesian persuasion. Our examples shed some insights on the payment choices. We formalize these insights in the later sections. 

%\rnote{I think this section needs to be cleaned heavily, and the language needs to be a little bit more formal.}

\subsection{A tempting but false argument}\label{sec:tempting}

Initially, it may seem that the optimal scheme with arbitrary payments must recommend an action in $\argmax_i\{\spay{i} + \rpay{i}\}$ for all $\nstate \in \SSpace$ (and pay the optimal payments). In fact, the following argument seems to confirm this intuition.
Consider an optimal persuasive signaling scheme $\phi$ and assume (towards a contradiction) that there is some state of nature $\nstate$ where $\phi$ does \emph{not} recommend an element of $\arg \max_i\{\spay{i} + \rpay{i}\}$. Let $j$ denote the action recommended, and $i$ denote an element in the argmax. Modify $\phi$ by moving all the probability mass from $\phi_{\nstate}(j)$ to $\phi_{\nstate}(i)$, i.e. recommend action $i$ in state $\nstate$, and increase $P(i)$ by $\dist \cdot \phi_{\nstate}(j) \cdot (\rpay{j}-\rpay{i})$. Let $\phi'$ be the new signaling scheme. The following two observations are immediate. First, the sender's payoff under $\phi'$ is strictly larger than the sender's payoff under $\phi$. This is because the sender pays an additional $\dist \phi_{\nstate}(j) (\rpay{j} - \rpay{i})$, and gets additional payoff $\dist \phi_{\nstate}(j) (\spay{i}-\spay{j})$; the total change in sender payoff is $\dist \phi_{\nstate}(j) \left(\spay{i} + \rpay{i} - \spay{j} - \rpay{j}\right) > 0$, by the hypothesis about action $i$. Second, the receiver's payoff under $\phi'$ is equal to the receiver's payoff under $\phi$. The receiver gets an additional $\dist \phi_{\nstate}(j) (\rpay{j}-\rpay{i})$ in payment, and gets additional payoff $\dist \phi_{\nstate}(j)(\rpay{i} - \rpay{j})$. Note that one of these terms might be negative.

Therefore, the sender's payoff in $\phi'$ is strictly improved, while the receiver is just as happy, i.e., it seems like we have reached a contradiction to the optimality of $\phi$. The catch is that $\phi'$ \emph{is not necessarily persuasive!} %It is possible to have a non-persuasive scheme $\phi'$ that gives the receiver higher expected payoff than a persuasive scheme $\phi$. 
Since payments in Bayesian persuasion are quite different than payments in (say) auctions, we find it educational to present here a concrete example that confirms that the scheme described is suboptimal.\footnote{The proofs of our theorems of course provide confirmation themselves, but since the intuition is especially jarring for those unfamiliar with Bayesian persuasion, the example should be instructive.}

%The first example we present below works through a simple example to confirm that it is indeed better to use the optimal scheme proposed by our theorems, and not the scheme suggested by this tempting logic.

Consider the following. There are two actions, $A$ and $B$. Each action gives the receiver payoff $0$ or $1$, uniform and iid. Identically for the sender payoff. In other words, there are four possible action types $(0,0), (0,1), (1,0)$ and $(1,1)$, encoding the receiver and sender payoffs respectively. 
We will compare the following two schemes: (1) Recommend a uniformly random action in $\arg \max_i \{ \spay{i} + \rpay{i} \}$, and pay the optimal payments (the scheme suggested above), (2) Recommend a uniformly random action that maximizes $\spay{i} + 2\rpay{i}$, and pay the optimal payments (the optimal scheme, as suggested by Proposition~\ref{prop:general arbitrary binary actions}).

Towards analyzing the first scheme, observe that conditioned on action $A$ having type $(1,1)$, it is recommended with probability $7/8$: it is always recommended unless $B$ has type $(1,1)$, in which case it is recommended with probability $1/2$. Conditioned on action $A$ having type $(0,1)$ or $(1,0)$, it is recommended with probability $1/2$, and conditioned on having type $(0,0)$ it is recommended with probability $1/8$. The expected receiver payoff for taking the recommendation, conditioned on action $A$ being recommended, is $(7/8+1/2+0+0)/2 = 11/16$. The expected receiver payoff for taking action $B$ when $A$ was recommended is $5/16$. Therefore, the optimal payments are $-6/16$ for each action. The sender's expected payoff when the recommendation is followed is also $11/16$, so the total sender utility (accounting for payments) is $17/16$. 

Let's analyze the second scheme. The difference between the two schemes is that the second one tie-breaks in favor of $(0,1)$ over $(1,0)$. So, conditioned on action $A$ having type $(1,1)$, it's still recommended with probability $7/8$. Conditioned on having type $(0,1)$, it's recommended with probability $5/8$, type $(1,0)$ is recommended with probability $3/8$, and type $(0,0)$ is recommended with probability $1/8$. The expected receiver payoff, conditioned on action $A$ being recommended, is now $(7/8 + 5/8+0+0)/2 = 3/4$. The expected payoff for taking action $B$ when $A$ was recommended is then $1/4$. This means that the optimal payments are $-1/2$ for each action. The sender's expected payoff when the recommendation is followed is $(7/8+0 + 3/8+0)/2 = 5/8$. The sender's total utility is $9/8$, indeed better than $17/16$.
%Again, the purpose of this is just to give the reader some experience with a concrete example, as payments in Bayesian persuasion are quite different than payments in (say) auctions. 

%\subsection{Example two: contrasting payment schemes}
\subsection{Contrasting payment schemes}
\label{sec:examples-contrast}

The following example demonstrates how negative payments can be quite counterintuitive in Bayesian persuasion, and how various constraints on payments differ. 
Consider an instance with one sender and one receiver that has two possible actions 0 and 1.  Suppose that $\SSpace=\{0,1\}$, that the two states are equiprobable, and that the payoff functions of the sender and the receiver satisfy: $s_\theta(0)=-r_\theta(0)={1-\theta}$ and $s_\theta(1)=-r_\theta(1)={1+\theta}$. The payoffs of the  sender and the receiver always sum up to zero, so  this is a zero-sum persuasion game. 

The optimal signaling without payments is to not reveal any information to the receiver. To see this, observe that the receiver must get payoff at least $-1/2$, as they can guarantee this by ignoring the sender's signal and taking action $0$. Thus, the sender cannot get utility strictly bigger than $1/2$, which they can guarantee by recommending action $0$. 
This scheme is in fact optimal even when the sender is allowed to use non-negative payments. To see this, observe that the receiver's utility is again at least $-1/2$: they can always take action $0$ regardless of the recommendation, and be paid a non-negative amount. Again, the sender cannot get utility strictly bigger than $1/2$. 

Next, consider the following scheme with negative payments, as suggested by Proposition~\ref{prop:general arbitrary binary actions}: always recommend action $0$ and charge $1$. This scheme is persuasive, because the receiver's expected payoff for taking action $1$ is $-3/2$. This gives the sender expected utility equal to $3/2$.

Finally, consider the following budget balanced scheme, recommended by Proposition~\ref{thm:budget-balance}. When $\nstate = 1$, recommend action $1$ with probability $q$. When $\nstate = 0$, recommend action $0$ deterministically. The receiver payoff for taking action $1$, conditioned on it being recommended, is $-2$. The receiver payoff for taking action $0$, conditioned on $1$ being recommended, is $0$. Therefore, the sender must pay $2$ to make this persuasive, for an expected total payment of $2 \cdot q \cdot 1/2 = q$. Conditioned on action $0$ being recommended, the receiver payoff for taking action $0$ is $(-0.5)/(1-q/2)$, while her payoff for taking action $1$ is $(-0.5 -(1-q))/(1-q/2)$. The sender can charge $(1-q)/(1-q/2)$ and remain persuasive, for a total expected charge of $1-q$. The total sender payoff (not counting payments) is then $2 \cdot q \cdot 1/2 + 0 \cdot (1-q) \cdot 1/2 + 1 \cdot 1/2 = q+1/2$. Accounting for payments, the sender pays $q$, and charges $(1-q)$, for an additional payoff of $1-2q$. The sender's total utility is $3/2 - q$. If $q = 0$, we recover the optimal scheme with possibly negative payments. If $q = 1/2$, then the total expected payment is $0$, and the scheme is budget balanced. In this case, the expected sender utility is $1$, strictly bigger than the optimal expected utility with non-negative payments.
 
The above example highlights the following concepts. First, the purpose of negative payments is to get extra payoff out of ``strictly persuasive'' schemes. Second, depending on the context, negative payments might not always be well motivated. For example, if the setup is such that a sender is making a non-binding recommendation to a receiver, then in the above examples the receiver would simply choose not to engage in the recommendation and always take action $0$. On the other hand, if the sender acts as a ``gatekeeper'' to the actions, e.g. because the actions are which event to attend that the sender is hosting or which fund to invest in that the sender manages, then negative payments are well motivated. Third, all three payment methods considered in this paper are distinct. 

\vspace{-3mm}
\section{Symmetric Actions and No Payments}
\label{sec:nopayment}
Here, we consider the standard symmetric setting without payments. We show how to derive structure on the optimal scheme by making use of duality. In any symmetric instance of Bayesian persuasion, there exists an optimal dual solution $\dualv^{*}(i,j)$ such that $\dualv^{*}(i,j)=\lagrange^*\geq 0$ for all $i\neq j$ by applying Proposition~\ref{prop:symmetry}. Now, by plugging directly into Corollary~\ref{cor:symmetryLagrange} and observing that $P(i) = 0$ for all $i$, we get that the Lagrangian for the optimal dual takes the following form:

\begin{equation}\label{eq: no payment iid}
\lagrangeF_{\lagrange}(\phi) = \sum_{\nstate\in\SSpace,i \in [n]}\dist\scheme{i} \left( \spay{i} + n \lagrange \rpay{i}  \right) - \lagrange C.
\end{equation}

%Notice that a persuasive scheme satisfies $\sum_{k \in [m]} \rho_k (x_k -  y_k) \geq 0$.
\noindent We immediately conclude that the optimal scheme recommends, for every state $\nstate$, an action maximizing $\spay{i} + n\lagrange^*\rpay{i}$. With a little more work, we can conclude something stronger about the exact value of $\lagrange^*$. 

\begin{definition}[$\lagrange$-scaled welfare maximizer] For a given multiplier $\lagrange$, define $\phi^{\lagrange}$ to be the scheme with $\phi^{\lagrange}_{\nstate}(i) = 0$ if $i \notin \arg\max_{j}\{\spay{j} + n\lagrange\rpay{j}\}$, and $\phi^{\lagrange}_{\nstate}(i) = 1/|\{\arg \max_j\{\spay{j}+n\lagrange\rpay{j}\}|$ if $i \in \arg\max_{j}\{\spay{j} + n\lagrange\rpay{j}\}$. In other words, $\phi^{\lagrange}$ recommends a uniformly random action in $\arg\max_{j}\{\spay{j} + n \lagrange\rpay{j}\}$.  
\end{definition}

\begin{proposition}\label{prop:symmetricnomoney} Let $\lagrange^*$ be the smallest $\lagrange \geq 0$ such that the scheme $\phi^{\lagrange}$ is persuasive for $\dist$. Then $\phi^{\lagrange^*}$ is the optimal scheme for $\dist$.
\end{proposition}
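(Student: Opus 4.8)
The plan is to use the single-dual-variable form of the Lagrangian in \eqref{eq: no payment iid} together with symmetry to collapse all the persuasiveness constraints into one scalar inequality, and then identify $\lagrange^*$ as the threshold at which that inequality first becomes tight. For each $\lagrange\geq0$ set $G(\lagrange):=\max_{\phi}\lagrangeF_{\lagrange}(\phi)$; by the definition of $\phi^{\lagrange}$ the inner maximum is attained at $\phi^{\lagrange}$, so $G(\lagrange)=\lagrangeF_{\lagrange}(\phi^{\lagrange})$, and $G$ is convex in $\lagrange$ as a maximum of affine functions. Writing $f(\phi)=\sum_{\nstate\in\SSpace,i\in[n]}\dist\scheme{i}\spay{i}$ and $R(\phi)=\sum_{\nstate\in\SSpace,i\in[n]}\dist\scheme{i}\rpay{i}$ for the sender's and receiver's expected payoffs, \eqref{eq: no payment iid} rearranges to $\lagrangeF_{\lagrange}(\phi)=f(\phi)+\lagrange\,(nR(\phi)-C)$, with $C=\sum_{\nstate\in\SSpace}\dist\sum_{j\in[n]}\rpay{j}$; note $C/n$ is exactly the receiver's payoff from ignoring the signal and choosing a uniformly random action. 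This yields weak duality: any persuasive $\psi$ does at least as well for the receiver as each fixed action, hence as their uniform average, so $R(\psi)\geq C/n$; therefore $\lagrangeF_{\lagrange}(\psi)\geq f(\psi)$ and, since $\phi^{\lagrange}$ maximizes $\lagrangeF_{\lagrange}$, $G(\lagrange)\geq f(\psi)$ for every persuasive $\psi$, i.e. $G(\lagrange)\geq\OPT$.

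The linchpin is reducing persuasiveness of $\phi^{\lagrange}$ to a single scalar condition. Let $g_{ij}(\phi)=\sum_{\nstate\in\SSpace}\dist\scheme{i}(\rpay{i}-\rpay{j})$ be the $(i,j)$ persuasiveness slack, so $\phi$ is persuasive iff $g_{ij}(\phi)\geq0$ for all $i\neq j$. Two observations combine. First, summing over ordered pairs telescopes to $\sum_{i\neq j}g_{ij}(\phi)=nR(\phi)-C$, precisely the coefficient of $\lagrange$ above. Second, because $\dist$ is symmetric and (as one checks directly from the definition) $\phi^{\lagrange}$ is a symmetric scheme, all the slacks $g_{ij}(\phi^{\lagrange})$ are equal: any permutation carrying the ordered pair $(i,j)$ to $(k,\ell)$ fixes $\dist$ and $\phi^{\lagrange}$ and sends $g_{ij}$ to $g_{k\ell}$. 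Hence $\phi^{\lagrange}$ is persuasive if and only if this common slack is nonnegative, equivalently $R(\phi^{\lagrange})\geq C/n$.

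With this equivalence, optimality follows by locating the threshold. By convexity of $G$ the quantity $nR(\phi^{\lagrange})-C$ is a subgradient of $G$ and hence nondecreasing in $\lagrange$, so the set of persuasive multipliers is upward closed, with left endpoint the promised $\lagrange^*$. At $\lagrange^*$ the common slack should be exactly zero — the point where $R(\phi^{\lagrange})$ first reaches $C/n$ — so $\lagrangeF_{\lagrange^*}(\phi^{\lagrange^*})=f(\phi^{\lagrange^*})$, giving $f(\phi^{\lagrange^*})=G(\lagrange^*)$. Chaining this with weak duality and with the persuasiveness of $\phi^{\lagrange^*}$ yields $\OPT\leq G(\lagrange^*)=f(\phi^{\lagrange^*})\leq\OPT$, so all three coincide and $\phi^{\lagrange^*}$ is optimal. (Note that only weak duality is needed here, not strong duality.)

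I expect the delicate step to be the claim that the slack is \emph{exactly} zero at $\lagrange^*$, rather than merely nonnegative. The map $\lagrange\mapsto R(\phi^{\lagrange})$ is piecewise constant and only nondecreasing, jumping at the finitely many $\lagrange$ where $\arg\max_j\{\spay{j}+n\lagrange\rpay{j}\}$ changes, and a priori a jump could carry $R(\phi^{\lagrange})$ strictly above $C/n$ as it crosses the threshold; if so, the uniform tie-break defining $\phi^{\lagrange^*}$ would retain a strictly positive penalty and be suboptimal. Resolving this requires confirming that the infimum over persuasive multipliers is attained (so $\lagrange^*$ is well defined) and that at the threshold the argmax ties can be resolved to make the receiver's payoff equal $C/n$ exactly — for which the uniform rule suffices when the crossing is continuous. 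The monotonicity of $R(\phi^{\lagrange})$ and the collapse of all persuasiveness constraints to a single scalar are exactly what make this threshold analysis tractable.
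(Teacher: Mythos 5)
Your two load-bearing reductions are the same as the paper's: symmetry collapses all persuasiveness constraints of a symmetric scheme into the single scalar condition $R(\phi)\geq C/n$ (this is exactly the paper's Lemma~\ref{lem:symmetric}, which you re-derive via "all slacks $g_{ij}$ are equal"), and that scalar is nondecreasing in $\lagrange$ along the family $\phi^{\lagrange}$ (the paper's Lemma~\ref{lemma: monotone symmetric}; your subgradient-monotonicity derivation is a valid substitute for the paper's pointwise exchange argument). You part ways at the endgame: the paper adds that the sender's payoff is nonincreasing in $\lagrange$ and declares optimality immediate, implicitly leaning on Corollary~\ref{cor:symmetryLagrange} to place an optimal scheme inside the family $\{\phi^{\lagrange}\}$, whereas you run a self-contained weak-duality chain $\OPT\leq G(\lagrange^*)=f(\phi^{\lagrange^*})\leq\OPT$, which requires the penalty $\lagrange^*\bigl(nR(\phi^{\lagrange^*})-C\bigr)$ to vanish exactly.

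That step --- which you flag but do not close --- is a genuine gap, and it cannot be patched, because when the crossing is discontinuous the missing claim is false rather than merely unproven. Concretely, take $n=2$ i.i.d.\ actions with three types $(\xi_k,\rho_k)$: $a=(4,0)$ with probability $0.3$, $b=(0,2)$ with probability $0.4$, $c=(1,1)$ with probability $0.3$, so $C/n=\mathbb{E}[\rho]=1.1$. For every $\lagrange<1$ one computes $R(\phi^{\lagrange})\leq 0.89<1.1$, while at $\lagrange=1$ types $a$ and $b$ tie and the uniform tie-break gives $R(\phi^{1})=1.13$; hence $\lagrange^*=1$ is attained, but with strictly positive slack, and $G(1)=1.71>1.65=f(\phi^{1})$. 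Moreover $\phi^{\lagrange^*}$ really is suboptimal: the symmetric scheme that resolves the $a/b$ tie toward $b$ with probability $7/16$ (chosen so that $R=1.1$ exactly, making the receiver indifferent) is persuasive and earns the sender $1.71=G(1)=\OPT$. So your chain cannot be completed on this instance --- and note that the same instance defeats the paper's own proof, whose monotonicity lemmas only give optimality of $\phi^{\lagrange^*}$ \emph{within} the uniform-tie-breaking family, and whose Corollary~\ref{cor:symmetryLagrange} does not force uniform tie-breaking. The repair is exactly what your last paragraph anticipates: at the threshold, $\phi^{\lagrange^*}$ must be redefined to break argmax ties (generally non-uniformly) so that the receiver's payoff equals $C/n$ exactly; with that amendment your weak-duality argument closes and is, if anything, the cleaner proof of the corrected statement.
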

\begin{proof}
The proof will follow from the following. We claim that the persuasiveness of $\phi^{\lagrange}$ is monotone increasing in $\lagrange$ (larger $\lagrange$ is more persuasive), while the sender payoff is monotone \emph{decreasing} in $\lagrange$. Together this immediately concludes that the optimal persuasive scheme is $\phi^{\lagrange^*}$. We'll first need a technical lemma.

\begin{lemma}\label{lem:symmetric} Let $\phi^1$, $\phi^2$ be symmetric schemes and let $\dist$ be symmetric. Let the expected receiver payoff for accepting recommendation $\phi^1$ for $\dist$ be at least as large as the expected receiver payoff for accepting recommendation $\phi^2$ for $\dist$. Then if $\phi^2$ is persuasive, so is $\phi^1$.\footnote{ Note that this does \emph{not} hold generally, and absolutely requires the symmetry assumptions.}
\end{lemma}
\begin{proof}
First, observe that by symmetry, the expected payoff for any action conditioned on that action being recommended is the same. Moreover, the expected payoff for any action $i$ conditioned on action $j \neq i$ being recommended is also the same (for all $i \neq j$). Therefore, the scheme is persuasive as long as for all actions, the expected payoff conditioned on being recommended exceeds the expected payoff conditioned on not being recommended. This holds if and only if the expected payoff conditioned on being recommended exceeds the unconditional expected payoff. As the unconditional expected payoff for an action is independent of the payoff scheme (denote it by $C$), we conclude that a signaling scheme is persuasive if and only if the expected receiver payoff for following the recommendation exceeds $C$. The lemma immediately follows.
\end{proof}

\begin{lemma}\label{lemma: monotone symmetric}
If $\phi^{\lagrange}$ is persuasive, then $\phi^{\lagrange+\delta}$ is persuasive, for all $\delta \geq 0$.
\end{lemma}

\begin{proof}
First, observe that $\phi^{\lagrange}$ is symmetric for all $\lagrange$. Further observe that the receiver's expected payoff for following the recommendation is monotone in $\lagrange$: On every state of nature $\nstate$, the recommended action maximizes $\spay{i} + n \lagrange \rpay{i}$. The proof now immediately follows by Lemma~\ref{lem:symmetric}. 
\end{proof}

\begin{lemma}
The sender's expected payoff when the receiver follows $\phi^{\lagrange}$ is monotone non-increasing in $\lagrange$.
\end{lemma}
\begin{proof}
Simply observe that on every state of nature $\nstate$, the recommended action maximizes $\spay{i}+n\lagrange\rpay{i}$. As $\lagrange$ increases, the sender payoff for the recommended action decreases.
\end{proof}

The proof of Proposition~\ref{prop:symmetricnomoney} now immediately follows.
\end{proof}

\section{Optimal Single Agent Signaling with Payments}
\label{sec:payment}
In this section we study the single receiver Bayesian persuasion game with payments. While we consider our ``main results'' to be the case where payments are constrained to be non-negative, it's instructive to study general (positive or negative) payments. We characterize the optimal signaling scheme with payments in the general setting, drawing similar conclusions to~\cite{CDW-16} for optimal auctions. This is not a main result, but may be of independent interest. An easy corollary of this characterization, however, immediately allows us to claim something interesting in the case of two actions. The optimal scheme, for all states $\theta$, recommends the action $i$ maximizing $\spay{i} + 2\rpay{i}$ (paying the optimal payments), and this holds for any distribution. 

A deeper application of this characterization lets us characterize the optimal scheme for a single receiver with $n$ symmetric actions. In this setting, we show that the optimal scheme recommends the action $i$ maximizing $\spay{i} + \frac{n}{n-1}\rpay{i}$ (paying the optimal payments).\footnote{ Further recall that the sender/receiver payoffs for action $i$ are completely determined by action $i$'s type. So this can also be phrased as recommending a uniformly random action with type $k$, where $k$ maximizes $\xi_k + \frac{n}{n-1}\rho_k$ over all present types $k$.} When payments are non-negative, we prove that the optimal scheme is always either the optimal scheme without payments at all, or the optimal scheme with arbitrary payments.

\subsection{The general setting with payments}
\label{sec:payment-general}

Recall Equation~\ref{eq:Lagrange equation:main}, where we had the Langrangian function for an arbitrary polytope $\polytope$:
\begin{equation*}
\textstyle \lagrangeF_{\dualv} (\phi,P) = \sum_{\nstate\in\SSpace,i \in [n]}\dist\scheme{i} \left( \spay{i} + \rpay{i} \sum_{j \neq i} \dual{i}{j} - \sum_{j \neq i} \dual{i}{j} \rpay{j} \right) + \sum_{i\in[n]} P(i) \left( \sum_{j \neq i} \dual{i}{j} - 1 \right).
\end{equation*} 

%\[ \lagrangeF_{\lagrange}(\phi,P) = \sum_{\theta \in \Theta} \sum_{i \in [n]} \dist \scheme{i} \left( s(\theta,i) + \rpay{i} \sum_{j \neq i} \lagrange(i,j) - \sum_{j \neq i} \lagrange(i,j) \rpay{j} \right) + P(i) \left( \sum_{j \neq i} \lagrange(i,j) - 1 \right). \]

Also recall that for every choice $\lagrange$ of the Lagrange multipliers, $\max_{\phi,P}\lagrangeF_{\lagrange}(\phi,P)$ is an upper bound to the performance of the optimal persuasive scheme. Strong duality further implies that this bound is tight for some choice of the Lagrange multipliers.

Observe that if payments are allowed to be arbitrary, then $ \max_{\phi,P} \lagrangeF_{\lagrange}(\phi,P)$ is unbounded whenever the coefficient for $P(i)$ is non-zero for any $i$ (as we can simply set $P(i)$ to be $+\infty$ or $-\infty$. Therefore, we certainly have $\sum_{j \neq i} \lagrange(i,j) = 1$ in the optimal dual, for all actions $i$. This means that for each action $i$, the dual variables $\lagrange(i,.)$ form a distribution over actions other than $i$. The simplified Lagrangian becomes

\[ \lagrangeF_{\lagrange}(\phi,P) = \sum_{\theta \in \Theta} \sum_{i \in [n]} \dist \scheme{i} \left( \spay{i} + \rpay{i}  - \mathbb{E}_{j\sim\lagrange(i,.)} [ \rpay{j} ] \right). \]

For every choice of Lagrange multipliers $\lagrange(i,j)$, the scheme $\phi$ that maximizes $\lagrangeF_{\lagrange}(\phi,P)$ recommends, for every state of nature $\theta$, the action that maximizes $\spay{i} + \rpay{i}  - \mathbb{E}_{j\sim\lagrange(i,.)} [ \rpay{j} ]$.

\begin{observation}\label{obs:duality}
The optimal persuasive scheme $\phi$ recommends, at each state of nature $\theta$, the action that maximizes $s_\theta(i) + \rpay{i}  - \mathbb{E}_{j\sim\lagrange^*(i,.)} [ \rpay{j} ]$, where $\lagrange^*$ is the optimal Lagrange multiplier.
\end{observation}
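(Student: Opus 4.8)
The plan is to obtain this statement as a direct specialization of Proposition~\ref{prop:duality} to the case of arbitrary payments, using the structural observation that precedes it. The heart of the argument is already isolated in the text: when $\polytope = \reals^n$, the inner maximization over $P$ in the min-max of Equation~\ref{eq:strongduality} is $+\infty$ unless, for every action $i$, the coefficient $\sum_{j \neq i} \dual{i}{j} - 1$ vanishes. So the key is to rule out this unbounded case at the optimal dual, and then simply read off the pointwise maximizer.

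First I would argue that the primal is feasible and bounded. Feasibility is immediate from Observation~\ref{obs:payments}: every scheme can be made persuasive with sufficiently large payments, so \ref{eq:Lp-general-payments} has a feasible point. Boundedness follows because the sender never benefits from paying more than the thresholds $T_i$ of Observation~\ref{obs:payments}, so the optimal objective is finite. By strong duality (the equality in Equation~\ref{eq:strongduality}), the optimal dual value therefore equals this finite primal value. Consequently, the optimal dual $\lagrange^*$ cannot make the inner maximum over $P$ blow up, which forces $\sum_{j \neq i} \lagrange^*(i,j) = 1$ for every action $i$. In particular, each $\lagrange^*(i,\cdot)$ is a probability distribution over actions $j \neq i$.

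Next I would substitute this identity into the dual-adjusted receiver payoff. Under $\sum_{j \neq i} \lagrange^*(i,j) = 1$, the definition $\dualr{i} = \rpay{i} \sum_{j \neq i} \dual{i}{j} - \sum_{j \neq i} \dual{i}{j} \rpay{j}$ collapses to $\rpay{i} - \mathbb{E}_{j \sim \lagrange^*(i,\cdot)}[\rpay{j}]$, and the entire $P$-dependent term $\sum_{i} P(i)(\sum_{j \neq i}\dual{i}{j} - 1)$ drops out of the Lagrangian in Equation~\ref{eq:Lagrange equation:main}. Proposition~\ref{prop:duality} then states that the optimal scheme, on every state of nature $\theta$, recommends the action maximizing $\spay{i} + \dualr{i}$; plugging in the collapsed expression yields exactly $\spay{i} + \rpay{i} - \mathbb{E}_{j \sim \lagrange^*(i,\cdot)}[\rpay{j}]$, as claimed.

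I do not anticipate a genuine technical obstacle here, since the statement is essentially a corollary of Proposition~\ref{prop:duality}. The only point requiring care is the boundedness argument that forces $\sum_{j \neq i} \lagrange^*(i,j) = 1$: one must confirm that the optimal primal value is finite so that the optimal dual does not select an unbounded inner maximum. Once this is in hand, the remainder is a mechanical substitution into the already-established pointwise optimization characterization.
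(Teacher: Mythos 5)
Your proposal is correct and follows essentially the same route as the paper: the paper likewise argues that with arbitrary payments the inner maximum over $P$ is unbounded unless $\sum_{j \neq i} \lambda(i,j) = 1$ for every $i$, concludes that the optimal dual variables $\lambda^*(i,\cdot)$ form a distribution over actions $j \neq i$, and reads off the pointwise maximizer of $s_\theta(i) + r_\theta(i) - \mathbb{E}_{j\sim\lambda^*(i,\cdot)}[r_\theta(j)]$ from the simplified Lagrangian. Your additions---spelling out primal feasibility and boundedness to justify that the optimal dual value is finite, and invoking Proposition~\ref{prop:duality} rather than re-deriving the pointwise optimization---are harmless elaborations of steps the paper leaves implicit.
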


Observation~\ref{obs:duality} provides a general framework to reason about optimal signaling schemes with arbitrary payments. We repeat now a connection to optimal auction design: In optimal auction design, there are some cases where the optimal dual is ``canonical,'' and doesn't depend on the input distribution (e.g. single-dimensional)~\citep{Myerson81}. In such settings, one can identify simple structure of the optimal mechanism. The case is similar in signaling: some cases admit a canonical optimal dual that doesn't depend on the input distribution. In these cases, we obtain simple characterizations of the optimal scheme. 
\subsection{Two actions, arbitrary payments}
\label{sec:binary general payments}
 In the $n=2$ case of only two actions, Observation~\ref{obs:duality} immediately allows us to derive the simple structure of the optimal signaling scheme.

\begin{proposition}\label{prop:general arbitrary binary actions}
When $n=2$, for every distribution $\dist$, the optimal persuasive scheme with possibly negative payments always recommends the action $i$ that maximizes $\spay{i} + 2\rpay{i}$ (and pays the optimal payments).
\end{proposition}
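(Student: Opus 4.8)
The plan is to specialize Observation~\ref{obs:duality} to the case $n=2$ and show that the optimal dual multipliers are forced to a unique value, making the scheme canonical. When $n=2$, the only pairs of distinct actions are $(1,2)$ and $(2,1)$, so the dual variables are $\dual{1}{2}$ and $\dual{2}{1}$. As established in Section~\ref{sec:payment-general}, because arbitrary payments are allowed, the coefficient of each $P(i)$ must vanish in the optimal dual, i.e.\ $\sum_{j \neq i}\dual{i}{j} = 1$ for each action $i$. With only two actions this immediately gives $\dual{1}{2} = 1$ and $\dual{2}{1} = 1$, since each of these single-term sums must equal $1$.

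With the dual variables pinned down, the next step is to substitute into the simplified Lagrangian from Observation~\ref{obs:duality}. There, the scheme recommends the action maximizing $\spay{i} + \rpay{i} - \mathbb{E}_{j \sim \lagrange^*(i,\cdot)}[\rpay{j}]$. For $i=1$, the distribution $\lagrange^*(1,\cdot)$ puts weight $1$ on action $2$, so the expression becomes $\spay{1} + \rpay{1} - \rpay{2}$; symmetrically for $i=2$ it becomes $\spay{2} + \rpay{2} - \rpay{1}$. I would then observe that comparing these two quantities is equivalent to comparing $\spay{1} + 2\rpay{1}$ against $\spay{2} + 2\rpay{2}$: indeed, $(\spay{1} + \rpay{1} - \rpay{2}) - (\spay{2} + \rpay{2} - \rpay{1}) = (\spay{1} + 2\rpay{1}) - (\spay{2} + 2\rpay{2})$, since adding the common term $\rpay{1} + \rpay{2}$ to each side of the original comparison preserves the argmax. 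Hence on every state $\nstate$ the optimal scheme recommends the action maximizing $\spay{i} + 2\rpay{i}$.

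Finally, I would invoke Observation~\ref{obs:payments} and the surrounding discussion to justify ``paying the optimal payments'': once the recommendation rule $\phi$ is fixed, the persuasiveness constraints are satisfied precisely when $P(i) \geq T_i$, and since the objective subtracts $\sum_i P(i)$, the optimum sets $P(i) = T_i$ exactly. This completes the characterization.

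The calculations here are entirely routine, so there is no real obstacle; the only subtlety to state carefully is \emph{why} the coefficients of $P(i)$ are forced to zero, which hinges on the unboundedness argument from Section~\ref{sec:payment-general} (if any coefficient were nonzero, the sender could drive the Lagrangian to $+\infty$ by choosing $P(i) = \pm\infty$, so no finite dual value could certify the optimum). I would make sure to cite that reasoning rather than re-deriving it, and then the algebraic reduction to $\spay{i} + 2\rpay{i}$ follows in one line.
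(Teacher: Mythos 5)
Your proposal is correct and follows essentially the same route as the paper's proof: both pin the dual multipliers to $\dual{1}{2}=\dual{2}{1}=1$ via the unboundedness argument for arbitrary payments, then observe that maximizing $\spay{i}+\rpay{i}-\rpay{j}$ is equivalent to maximizing $\spay{i}+2\rpay{i}$ because the term $\rpay{1}+\rpay{2}$ is independent of the recommendation. Your additional remark that optimality forces $P(i)=T_i$ via Observation~\ref{obs:payments} is a welcome explicit treatment of a step the paper leaves implicit.
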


\begin{proof}
Observe that there are only two Lagrange multipliers, $\lagrange(0,1)$ and $\lagrange(1,0)$, and are both equal to $1$ in the optimal dual (by Observation~\ref{obs:duality}). Therefore, the Lagrangian can be further simplified:
\begin{talign*}
\lagrangeF_{\lagrange}(\phi,P) &= \sum_{\theta \in \Theta} \sum_{i \in [2]} \dist \scheme{i} \left( \spay{i} + \rpay{i} - \sum_{j \neq i} \rpay{j} \right) \\
&= \sum_{\theta \in \Theta} \sum_{i \in [2]} \dist \scheme{i} \left( \spay{i} + 2\rpay{i} - \sum_{j \in [2]} \rpay{j} \right).
\end{talign*}
Observe that the term $\sum_{j \in [2]} \rpay{j}$ does not depend on the action selected at all. So in order to maximize $\lagrangeF_{\lagrange}(\phi,P)$, the scheme must recommend the action maximizing $\spay{i} + 2\rpay{i}$ for every state of nature $\theta$. 
\end{proof}

\subsection{Symmetric actions}
Here, we draw conclusions for the symmetric setting with payments. Again getting initial traction from a canonical form for the optimal dual. Recall by Corollary~\ref{cor:symmetryLagrange}, the Lagrangian in this setting is given by:

\[ \textstyle \lagrangeF_{\lagrange}(\phi,P) = \sum_{\nstate\in\SSpace,i \in [n]}\dist\scheme{i} \left( \spay{i} + n \lagrange \rpay{i}  \right) + \sum_{i\in[n]} P(i) \left( (n-1) \lagrange - 1 \right) - \lagrange C. \]

\subsection*{Arbitrary payments}

When arbitrary payments are allowed, i.e. $\polytope = \mathbb{R}$, then the multiplier $(n-1)\lagrange - 1$ of the payment variable $P(i)$  must be equal to zero for all $i$. Otherwise the Lagrangian would be unbounded. This immediately implies that for the optimal dual, we have $\lagrange = \frac{1}{n-1}$, and the Lagrangian becomes

\[ \textstyle \lagrangeF_{\lagrange}(\phi,P) = \sum_{\nstate\in\SSpace,i \in [n]}\dist\scheme{i} \left( \spay{i} + \frac{n}{n-1} \rpay{i}  \right) - \frac{1}{n-1} C. \]

The proof of the following proposition then immediately follows.

\begin{proposition}
In the single sender, single receiver setting with symmetric actions and arbitrary payments, the optimal scheme recommends, on every state of nature $\nstate$, the action $i$ that maximizes $\spay{i} + \frac{n}{n-1} \rpay{i}$ (and pays the optimal payments). 
\end{proposition}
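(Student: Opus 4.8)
The plan is to read off the result directly from the simplified Lagrangian that the excerpt has already derived for the symmetric, arbitrary-payment setting, namely
\[
\lagrangeF_{\lagrange}(\phi,P) = \sum_{\nstate\in\SSpace,i \in [n]}\dist\scheme{i} \left( \spay{i} + \tfrac{n}{n-1} \rpay{i}  \right) - \tfrac{1}{n-1} C.
\]
The key observation is that once the payment coefficient $(n-1)\lagrange - 1$ has been forced to zero (as argued just above the statement), the term $\sum_i P(i)\big((n-1)\lagrange-1\big)$ vanishes, so $P$ drops out of the Lagrangian entirely and the value $\lagrange = \frac{1}{n-1}$ is pinned down. The residual objective is then linear in $\phi$ with the additive constant $-\frac{1}{n-1}C$, where $C$ does not depend on which action is recommended.

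First I would invoke Proposition~\ref{prop:symmetry} (via Corollary~\ref{cor:symmetryLagrange}) to justify using a single symmetric dual multiplier $\lagrange$, so that the displayed Lagrangian form is legitimate. Next I would argue boundedness: since $\polytope = \mathbb{R}^n$ allows $P(i) \to \pm\infty$, the supremum $\max_{\phi,P}\lagrangeF_{\lagrange}$ is $+\infty$ unless every coefficient $(n-1)\lagrange - 1$ equals zero; by strong duality (Equation~\ref{eq:strongduality}) the optimal dual must keep the Lagrangian bounded, forcing $\lagrange = \frac{1}{n-1}$. Then I would appeal to Proposition~\ref{prop:duality}: at the optimal dual the optimal signaling scheme recommends, at each state $\nstate$, an action maximizing the dual-adjusted objective, which here is exactly $\spay{i} + \frac{n}{n-1}\rpay{i}$ since the constant $-\frac{1}{n-1}C$ and the (now-eliminated) payment term are irrelevant to the per-state maximization. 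Finally, given this optimal scheme $\phi$, the accompanying payments are determined by Observation~\ref{obs:payments} and the definition of optimal payments, so I would simply state that the scheme pays the optimal payments $T_1,\ldots,T_n$ to guarantee persuasiveness.

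I do not anticipate a genuine obstacle, since the statement is essentially a corollary of the algebra carried out immediately before it; the excerpt itself says ``the proof of the following proposition then immediately follows.'' The only point requiring mild care is the boundedness argument that uniquely pins $\lagrange = \frac{1}{n-1}$: I would make explicit that this is the \emph{only} feasible dual value at which the inner maximization over $P \in \mathbb{R}^n$ stays finite, and hence must be the value selected by the optimal (bounded) dual guaranteed by strong duality. A secondary subtlety worth a sentence is confirming that the constant $C = \sum_{\nstate \in \SSpace} \dist \phi_{\nstate} \sum_{j \in [n]} \rpay{j}$ (from Corollary~\ref{cor:symmetryLagrange}) genuinely does not affect the per-state argmax, so that the canonical ``recommend $\argmax_i\{\spay{i} + \frac{n}{n-1}\rpay{i}\}$'' description is exact and detail-free.
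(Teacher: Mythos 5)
Your proposal is correct and takes essentially the same route as the paper: symmetry (Corollary~\ref{cor:symmetryLagrange}) yields the single-multiplier Lagrangian, boundedness of the inner maximization over $P \in \mathbb{R}^n$ under strong duality forces $(n-1)\lagrange - 1 = 0$, hence $\lagrange = \frac{1}{n-1}$, and the per-state argmax of $\spay{i} + \frac{n}{n-1}\rpay{i}$ then follows since the constant $C$ is scheme-independent. Your explicit appeals to Proposition~\ref{prop:duality} for the pointwise maximization and to Observation~\ref{obs:payments} for the accompanying payments only make explicit what the paper leaves implicit in its ``immediately follows'' remark.
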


\subsection*{Non-negative payments: a dichotomy}

When payments are restricted to be non-negative, it is no longer the case that $\lagrange$ is pinned down completely (in particular, $\lagrange$ must certainly be $\leq \frac{1}{n-1}$, or else setting $P(i) = +\infty$ would result in an unbounded Lagrangian, but it is indeed possible to have $\lagrange < \frac{1}{n-1}$). Our next result shows that the optimal scheme in this scenario is essentially either the optimal scheme without any payments, or the optimal scheme for arbitrary payments (because the payments are already non-negative)

\begin{proposition}
In the single sender, single receiver setting with independent and identically distributed actions and non-negative payments, the optimal scheme is either (1) the optimal no-payment scheme, or (2) recommends the action $i$ that maximizes $\spay{i} + \frac{n}{n-1} \rpay{i}$ (and pays the optimal non-negative payments).
\end{proposition}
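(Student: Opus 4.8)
The plan is to read the dichotomy directly off the symmetric Lagrangian, using the sign of the payment coefficient at the optimal dual. Since i.i.d. distributions are in particular symmetric, Corollary~\ref{cor:symmetryLagrange} supplies a single multiplier $\lagrange$ and the form
\[ \lagrangeF_{\lagrange}(\phi,P) = \sum_{\nstate\in\SSpace,i \in [n]}\dist\scheme{i} \left( \spay{i} + n \lagrange \rpay{i}  \right) + \sum_{i\in[n]} P(i) \left( (n-1) \lagrange - 1 \right) - \lagrange C. \]
By strong duality (Equation~\ref{eq:strongduality}), the optimal persuasive value equals $\min_{\lagrange}\max_{\phi, P\in\polytope}\lagrangeF_{\lagrange}(\phi,P)$, attained at some optimal dual $\lagrange^*$. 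In the non-negative model $\polytope = \mathbb{R}_{\geq 0}^n$, the inner maximization over $P(i)\geq 0$ is $+\infty$ whenever the coefficient $(n-1)\lagrange - 1$ is positive; hence any finite optimal dual must satisfy $(n-1)\lagrange^* - 1 \leq 0$, i.e. $\lagrange^* \leq \tfrac{1}{n-1}$. The whole argument then splits on whether this inequality is tight.

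In the first case, $\lagrange^* = \tfrac{1}{n-1}$, so the payment coefficient vanishes and $P$ disappears from the objective. Optimizing the remaining term $\sum_{\nstate,i}\dist\scheme{i}(\spay{i}+\tfrac{n}{n-1}\rpay{i})$ state-by-state (as in Proposition~\ref{prop:duality}) forces the optimal scheme to recommend, on every $\nstate$, an action maximizing $\spay{i}+\tfrac{n}{n-1}\rpay{i}$; we then attach the optimal non-negative payments of Observation~\ref{obs:payments} to guarantee persuasiveness. This is exactly option~(2).

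In the second case, $\lagrange^* < \tfrac{1}{n-1}$, the payment coefficient is strictly negative, so the inner maximum over $P(i)\geq 0$ is attained at $P^*(i) = 0$ for all $i$, and the optimal scheme recommends $\arg\max_i\{\spay{i}+n\lagrange^*\rpay{i}\}=\phi^{\lagrange^*}$ with no payments. The substantive step is to identify this zero-payment optimum with the \emph{optimal} no-payment scheme. By strong duality the primal optimum $(\phi^{\lagrange^*},0)$ is feasible, i.e. persuasive with zero payments, hence feasible for the no-payment LP, and its sender value (nothing is deducted since $P^*\equiv 0$) equals the optimal non-negative-payment value. Since the no-payment LP is precisely the restriction $P\equiv 0$ of the non-negative LP, its optimum is at most the non-negative optimum, while $(\phi^{\lagrange^*},0)$ certifies the reverse inequality; the two values coincide, so $\phi^{\lagrange^*}$ is optimal among no-payment schemes (agreeing with the canonical $\phi^{\lagrange^*_{np}}$ of Proposition~\ref{prop:symmetricnomoney}, whose sender payoff is monotone non-increasing in $\lagrange$). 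This gives option~(1).

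I expect the main obstacle to be exactly this second-case identification: strong duality alone only pins down the objective value and the recommendation rule $\arg\max_i\{\spay{i}+n\lagrange^*\rpay{i}\}$, so I must separately verify that $\phi^{\lagrange^*}$ is genuinely persuasive without payments (which follows from feasibility of the primal optimum) and that no strictly better no-payment scheme exists (the value-matching/restriction argument above, supported by the persuasiveness-monotonicity lemmas of Section~\ref{sec:nopayment}). I should also note that the boundary $\lagrange^* = \tfrac{1}{n-1}$ may satisfy both descriptions, so the stated dichotomy is to be read inclusively rather than exclusively.
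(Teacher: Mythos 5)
Your proposal is correct and follows essentially the same route as the paper's own proof: both case-split on whether the optimal symmetric dual satisfies $\lagrange^* = \tfrac{1}{n-1}$ (recovering the arbitrary-payment Lagrangian, hence option (2)) or $\lagrange^* < \tfrac{1}{n-1}$ (strictly negative payment coefficient forces $P \equiv 0$, and optimality over the larger non-negative-payment feasible set plus feasibility for the no-payment restriction gives option (1)). Your additional remarks --- deriving $\lagrange^* \le \tfrac{1}{n-1}$ from unboundedness and reading the dichotomy inclusively at the boundary --- match what the paper states just before the proposition, so there is no substantive difference in approach.
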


\begin{proof}
Let $\lagrange^* \in [0, \frac{1}{n-1}]$ be the $\lagrange$ guaranteed by Corollary~\ref{cor:symmetryLagrange}. Then there is an optimal scheme $(\phi^*, P^*)$ that maximizes $\lagrangeF_{\lagrange^*}(\phi, P)$ over all feasible $(\phi, P)$. 

If $\lagrange^* < \frac{1}{n-1}$, then $(n-1)\lagrange^* - 1$ is strictly negative, and hence the multiplier of each payment variable $P(i)$ is strictly negative. Therefore, every scheme that maximizes $\lagrangeF_{\lagrange^*}(\phi, P)$ must have $P(i) = 0$ for all actions $i$. Hence, the scheme $\phi$ is in fact feasible and persuasive for the no-payments case, and must be the optimal scheme without payments (as every scheme without payments is also feasible for non-negative payments, and $\phi$ is optimal among all schemes with non-negative payments). 

If $\lagrange^* = \frac{1}{n-1}$, then we immediately observe that this is exactly the same Lagrangian as for arbitrary payments, and therefore the second part of the proposition follows.
\end{proof}

\vspace{-5mm}
\section{Optimal Multi-agent Signaling with Externalities and Payments}
\label{sec:binary}
In this section, we study Bayesian persuasion for multiple receivers with binary actions and general externalities through the lens of duality. Our model is a natural extension of the model recently introduced in~\cite{AB-16}, and further developed in~\cite{DH-17}. We refer the reader to Section~\ref{sec:prelim} for notation and definitions.
Missing proofs can be found in Appendix~\ref{sec:missing-proofs}.
In Appendix~\ref{sec: computation plus externalities} we study optimal signaling with positive externalities: when no payments are allowed, and sender and receiver utility functions lie in some cone of set functions $\mathcal C$, we use duality to exhibit a polynomial time reduction from optimal signaling to the optimization problem for set functions in $\mathcal C$.

\subsection{Linear programming formulation}
\label{sec:binary-lp}
Similar to all other Bayesian persuasion problems, one can formulate finding the optimal signaling scheme with payments as a linear program, but this time with exponentially many variables (again, $\polytope$ is the set of feasible payments):
\begin{talign*}
\max~~~ &\sum_{\theta\in\SSpace}\sum_{S\subseteq [\agents]}\SDist\BScheme{S}\left(\Ffun{S}+\sum_{i\in S}\pay{i}{1}+\sum_{i\notin S}\pay{i}{0}\right)&\customlabel{eq:LP-binary-nopayments}{\textit{(LP-Binary)}}\\
&\sum_{\theta\in\SSpace}\SDist\sum_{S\ni i}\BScheme{S}\left(\util{i}{S}+\pay{i}{1}\right)\geq \sum_{\theta\in\SSpace}\SDist\sum_{S\ni i}\BScheme{S}\util{i}{S\setminus \{i\}}, & \forall i\in[\agents], \\
&\sum_{\theta\in\SSpace}\SDist\sum_{S\notni i}\BScheme{S}\left(\util{i}{S}+\pay{i}{0}\right)\geq \sum_{\theta\in\SSpace}\SDist\sum_{S\notni i}\BScheme{S}\util{i}{S\cup\{i\}}, & \forall i\in[\agents] \\
&\sum_{S\subseteq[\agents]}\BScheme{S}=1, ~~\forall \theta\in\SSpace~~~~~~~~~\BScheme{S}\geq 0, \forall S\subseteq [\agents], \theta\in\SSpace \\
&\{\pay{i}{a}\}_{i\in[\agents], a\in\{0,1\}}\in\polytope
\end{talign*}

The first two sets of constraints in this LP are essentially persuasion constraints, i.e. if a receiver is in the recommended set $S$ she is better off picking action $1$ and if a receiver is not in the recommended set $S$ she is better off picking action $0$. For notation brevity, we rewrite the persuasiveness constraints as:
\begin{align*}
\customlabel{eq:cons-1}{(*)}&~~~~~~~~~~\sum_{\theta\in\SSpace}\SDist\sum_{S\ni i}\BScheme{S}\left(\Gfun{i}{S}+\pay{i}{1}\right)\geq 0,&\forall i\in[\agents] \\
\customlabel{eq:cons-2}{(**)}&~~~~~~~~~~\sum_{\theta\in\SSpace}\SDist\sum_{S\notni i}\BScheme{S}\left(\Gfun{i}{S\cup\{i\}}-\pay{i}{0}\right)\leq 0,&\forall i\in[\agents]
\end{align*}

where $\Gfun{i}{S}\triangleq \util{i}{S}-\util{i}{S\setminus\{i\}}$. Note that $\Gfun{i}{S}=0$ for $i\notin S$.

\vspace{-3mm}
\subsection{Budget balanced payments under externalities }
As explored in Section~\ref{sec:payment}, adding monetary payments to the signaling problem, whether payments are positive or negative, is a natural idea to boost the performance of signaling schemes. It is expected that by adding arbitrary payments, one can increase the expected sender's utility, as we have seen in Section~\ref{sec:examples-contrast}; however, it is not clear what happens if we allow restricted payments, e.g. payments that are budget-balanced. A general and mathematically interesting setting to study this question is the general multi-agent signaling with externalities. We seek to find a simple characterization for the optimal scheme, and understand its properties. In the same setting, we also study arbitrary payments and show how one can modify the previous budget balanced scheme to find the optimal scheme with arbitrary payments. This model is again mathematically interesting, but perhaps not as natural as the budget-balanced.

Here is the surprising upshot of the story: with the help of Lagrangian duality we can design signaling schemes with payments that have a simple form (in contrast to the optimal scheme without payments), are budget-balanced in expectation (i.e. zero total payment in expectation),  and sender's expected payoff~\footnote{ When the scheme is budget-balanced, there is no difference between expected utility and expected payoff of the sender.} is no smaller than that of the optimal signaling without payments. Also, as we explained in Section~\ref{sec:examples-contrast}, somewhat surprisingly, it is possible that the sender's expected utility strictly increases by a budget-balanced scheme.
\begin{definition}
\label{def:budget-balance}
 A \emph{budget-balanced signaling scheme} for the multi-agent binary-actions with externalities setting is a pair of an \emph{allocation rule} $\{\BScheme{S}\}_{S\subseteq [\agents]}$ and a \emph{payment rule} $\{\pay{i}{a}\}_{i\in[\agents], a\in\{0,1\}}$ satisfying:
\begin{align*}
&\forall \theta\in \SSpace: \sum_{S\subseteq [\agents] }\BScheme{S}=1, ~~\forall S\subseteq[\agents],\theta\in\SSpace:\BScheme{S}\geq 0~~~~~&\textit{[feasibility]}\\
&\sum_{i\in [\agents]}\sum_{\theta\in\SSpace }\left(\sum_{S\ni i}\SDist\BScheme{S}\pay{i}{1}+\sum_{S\notni i}\SDist\BScheme{S}\pay{i}{0}\right)=0,~~~~~&\textit{[budget-balance]}
\end{align*}
where $\pay{i}{a}$ is the payment to receiver $i$ in state $\theta$, when the receiver takes action $a$.
\end{definition}

\begin{proposition}[Optimal budget-balanced signaling with externalities]\label{prop:optimal bb with externalities}
In multi-agent binary-action with externalities setting, there exists a persuasive signaling scheme such that:
\begin{enumerate}[leftmargin=*]
\item It is budget-balanced as in Definition~\ref{def:budget-balance} (but may charge negative payments),
\item It maximizes the ``total virtual payoff". That is, there exists a parameter $\gamma^*\geq 0$ such that for every state of nature $\theta$ the scheme recommends action $1$ to a subset of agents $S^*_\theta$ that maximizes $\Ffun{S}+\gamma^*\left(\sum_{i\in S}\Gfun{i}{S}-\sum_{i\notin S}\Gfun{i}{S\cup \{i\}}\right)$, where $\Gfun{i}{S}=\util{i}{S}-\util{i}{S\setminus \{i\}}$,
\item The expected total payoff of the sender is no smaller than the expected total payoff of the sender under any budget-balance persuasive scheme (with or without payments).
\end{enumerate}
\label{thm:budget-balance}
\end{proposition}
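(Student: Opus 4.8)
The plan is to run the partial-Lagrangian machinery of Section~\ref{sec:lagrange} on \ref{eq:LP-binary-nopayments} with $\polytope$ the budget-balanced polytope, and to show that budget balance forces all persuasiveness multipliers to collapse to a single common value $\gamma^*\geq 0$. First I would linearize the payments exactly as in the single-agent case: for each agent $i$ introduce expected-payment variables $P^{(i)}_1=\sum_{\theta}\SDist\sum_{S\ni i}\BScheme{S}\pay{i}{1}$ and $P^{(i)}_0=\sum_{\theta}\SDist\sum_{S\not\ni i}\BScheme{S}\pay{i}{0}$, which realize any real value whenever the corresponding recommendation has positive probability (and are otherwise irrelevant, since the matching persuasiveness constraint is then vacuous). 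In these variables, \ref{eq:cons-1} reads $\sum_{\theta}\SDist\sum_{S\ni i}\BScheme{S}\Gfun{i}{S}+P^{(i)}_1\geq 0$, \ref{eq:cons-2} reads $\sum_{\theta}\SDist\sum_{S\not\ni i}\BScheme{S}\Gfun{i}{S\cup\{i\}}-P^{(i)}_0\leq 0$, budget balance is the single hyperplane $\sum_i(P^{(i)}_1+P^{(i)}_0)=0$, and the objective's payment contribution equals $\sum_i(P^{(i)}_1+P^{(i)}_0)$.

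Before dualizing, I would record feasibility and boundedness so that strong duality (the analog of Equation~\ref{eq:strongduality}) applies: the fully-revealing scheme that recommends a correlated equilibrium of each realized complete-information game is persuasive with zero payments, hence is a budget-balanced persuasive scheme, and the sender payoff is bounded because the instance is finite. I would then Lagrangify \ref{eq:cons-1} and \ref{eq:cons-2} with multipliers $\alpha_i,\beta_i\geq 0$, keeping the simplex constraints and the budget-balance hyperplane in the primal. The coefficient of $P^{(i)}_1$ in $\lagrangeF$ is $1+\alpha_i$ and that of $P^{(i)}_0$ is $1+\beta_i$ (the $1$ coming from the objective).

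The crux is the collapse of the multipliers. Since the only remaining constraint on the $P$-variables is the hyperplane $\sum_i(P^{(i)}_1+P^{(i)}_0)=0$ and they are otherwise free (negative payments allowed), $\max_P\lagrangeF$ is finite only if all these coefficients coincide; otherwise one can move along the hyperplane to drive $\lagrangeF$ to $+\infty$. Hence at the optimal dual we must have $1+\alpha_i=1+\beta_i=\kappa$ for a common constant $\kappa$ and every $i$, so $\alpha_i=\beta_i=\gamma^*:=\kappa-1\geq 0$ — a single multiplier, with nonnegativity inherited from $\alpha_i,\beta_i\geq 0$. This step, where the lone budget-balance constraint does all the work of pinning down the entire dual, is the main obstacle: one must carry out the boundedness argument over the hyperplane rather than over all of payment space, and confirm $\gamma^*\geq 0$.

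Substituting $\alpha_i=\beta_i=\gamma^*$, the payment terms vanish (they equal $\kappa\sum_i(P^{(i)}_1+P^{(i)}_0)=0$ under budget balance), and collecting the coefficient of $\SDist\BScheme{S}$ turns the Lagrangian into $\sum_{\theta,S}\SDist\BScheme{S}\big(\Ffun{S}+\gamma^*(\sum_{i\in S}\Gfun{i}{S}-\sum_{i\notin S}\Gfun{i}{S\cup\{i\}})\big)$, which is exactly the total virtual payoff in the statement. Maximizing over $\{\BScheme{\cdot}\}$ decouples across states and, just as in Proposition~\ref{prop:duality}, places all mass on a subset $S^*_\theta$ maximizing the bracketed quantity, giving item~2. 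Finally, strong duality furnishes an optimal primal pair that attains this Lagrangian value: it is budget-balanced and persuasive (item~1), supported on virtual-payoff maximizers (item~2), and its objective upper-bounds that of every budget-balanced persuasive scheme because $\max_{\phi,P}\lagrangeF_{\gamma^*}$ is a valid Lagrangian upper bound (item~3). The remaining work is bookkeeping directly analogous to the single-agent analysis.
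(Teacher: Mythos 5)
Your proposal is correct and takes essentially the same route as the paper's proof: change variables to expected payments, take the partial Lagrangian of the persuasiveness constraints, force the multipliers to collapse to a common value $\gamma^*=\alpha_i=\beta_i\geq 0$ via a boundedness argument over the unrestricted payment variables, and read off the per-state virtual-payoff maximizer, with items 1 and 3 following from strong duality and LP feasibility of no-payment schemes. The only cosmetic difference is that you keep the budget-balance hyperplane in the primal and argue boundedness along it, whereas the paper Lagrangifies that equality with its own multiplier $\gamma$ and concludes $\alpha_i=\beta_i=\gamma$ for all $i$; both yield the identical collapsed Lagrangian and conclusion.
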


One can show a similar characterization for the optimal signaling scheme with arbitrary payments. In fact, because the payments are now unrestricted, the corresponding dual constraints are going to be equalities. This fact results in the following generalization of Proposition~\ref{prop:general arbitrary binary actions} for the special case of $N=1$.

\begin{proposition}[Optimal signaling with externalities and arbitrary payments]\label{prop:general-binary-external-arbitrary}
 In the multi-agent binary-action with externalities setting, the optimal signaling scheme with arbitrary payments maximizes the total payoff, i.e.  for every state $\theta$ the scheme recommends action $1$ to a subset of agents $S^*_\theta$ that maximizes
$\Ffun{S}+\left(\sum_{i\in S}\Gfun{i}{S}-\sum_{i\notin S}\Gfun{i}{S\cup \{i\}}\right)$, where $\Gfun{i}{S}=\util{i}{S}-\util{i}{S\setminus \{i\}}$.
\end{proposition}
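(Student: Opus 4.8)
The plan is to run the same partial-Lagrangian argument used for the single-agent arbitrary-payment results (Proposition~\ref{prop:general arbitrary binary actions} and the symmetric case), now applied to \ref{eq:LP-binary-nopayments}. First I would Lagrangify the two families of persuasiveness constraints, introducing a nonnegative multiplier $\alpha_i$ for each constraint \ref{eq:cons-1} and $\beta_i$ for each constraint \ref{eq:cons-2}, while leaving the simplex feasibility constraints ($\sum_{S}\BScheme{S}=1$, $\BScheme{S}\geq 0$) and the payment membership $p\in\polytope=\mathbb{R}^{2\agents}$ in the primal, exactly as in Observation~\ref{obs:lagrangians}. Grouping terms and using $\Gfun{i}{S}=0$ for $i\notin S$, the coefficient of $\SDist\BScheme{S}$ in the resulting Lagrangian $\lagrangeF$ is $\Ffun{S}+\sum_{i\in S}\alpha_i\Gfun{i}{S}-\sum_{i\notin S}\beta_i\Gfun{i}{S\cup\{i\}}$, while each payment collects into its own Lagrangian coefficient an affine expression in the matching dual multiplier. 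To make this coefficient genuinely independent of $\phi$, I would first pass to expected-payment variables $P_i^1=\sum_{\theta}\SDist\sum_{S\ni i}\BScheme{S}\pay{i}{1}$ and $P_i^0=\sum_{\theta}\SDist\sum_{S\notni i}\BScheme{S}\pay{i}{0}$, which decouple the payments from the allocation and range freely over $\mathbb{R}$.

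The crux of the argument is then the boundedness observation, mirroring the pinning-down of $\lagrange=\tfrac{1}{n-1}$ in the symmetric case and $\dual{i}{j}=1$ in Proposition~\ref{prop:general arbitrary binary actions}. Since payments are unrestricted, $\max_{\phi,p}\lagrangeF$ is $+\infty$ whenever any payment coefficient is nonzero (send that $P_i^1$ or $P_i^0$ to $\pm\infty$). Hence in the optimal dual --- the one attaining the finite value $\min_{\alpha,\beta\geq 0}\max_{\phi,p}\lagrangeF$ guaranteed by strong duality --- every such coefficient must vanish, which forces $\alpha_i=\beta_i=1$ for all $i$. This is the sense in which ``the corresponding dual constraints become equalities,'' and it is precisely the budget-balanced form of Proposition~\ref{thm:budget-balance} specialized to $\gamma^*=1$.

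Substituting $\alpha_i=\beta_i=1$ kills all payment terms, and after swapping $\sum_i\sum_{S\ni i}=\sum_S\sum_{i\in S}$ the Lagrangian collapses to $\lagrangeF=\sum_{\theta,S}\SDist\BScheme{S}\bigl(\Ffun{S}+\sum_{i\in S}\Gfun{i}{S}-\sum_{i\notin S}\Gfun{i}{S\cup\{i\}}\bigr)$, with no leftover additive constant. As in Proposition~\ref{prop:duality}, this objective is linear and separable across states subject to $\sum_S\BScheme{S}=1$, so it is maximized by placing, for each $\theta$, all mass on a subset $S^*_\theta$ maximizing the bracketed total (virtual) payoff --- exactly the claimed rule. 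I would then note that this $\phi$ is realizable in the primal: since arbitrary payments can render any scheme persuasive (the multi-agent analogue of Observation~\ref{obs:payments}), pairing $S^*_\theta$ with the tight/optimal payments yields a feasible persuasive scheme whose value matches the dual bound, hence is optimal. A quick check that $\agents=1$ recovers the rule ``maximize $\spay{i}+2\rpay{i}$'' of Proposition~\ref{prop:general arbitrary binary actions} serves as a sanity test.

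The step I expect to be the main obstacle is making the boundedness-and-strong-duality argument fully rigorous rather than formal: one must justify that the reformulated $P_i^1,P_i^0$ truly range over all of $\mathbb{R}$ (so that a nonzero coefficient really yields an unbounded inner maximum), argue that the minimizing dual therefore lies in the region $\{\alpha_i=\beta_i=1\}$ where that maximum is finite, and separately verify implementability of the selected $\phi$ --- this last point needs only the existence of \emph{some} persuasive payments, not the (possibly different) payments implicit in the Lagrangian. Everything else is bookkeeping that parallels the single-agent proofs already in the paper.
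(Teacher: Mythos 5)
Your proposal is correct and takes essentially the same route as the paper: the paper proves the budget-balanced case (Proposition~\ref{prop:optimal bb with externalities}) via exactly this partial Lagrangian in the expected-payment variables $Q_i(\cdot)$, and obtains the arbitrary-payment proposition by the observation you make --- that unrestricted payments force every payment coefficient in the Lagrangian to vanish, pinning $\alpha_i=\beta_i=1$ and collapsing the objective to the pointwise virtual-payoff maximization. Your additional remarks on realizability and the $\agents=1$ sanity check are consistent with the paper's treatment.
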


\section{Conclusion}

We augment Bayesian persuasion by introducing payments, in single and multi-receiver settings. We obtain a number of results, all enabled via Lagrangian duality. For symmetric, single-receiver persuasion with no payments we show that a detail-free, ex-post Pareto optimal scheme is optimal. In the same setting, if arbitrary payments are allowed, the optimal scheme does not even depend on the prior: it always recommends the action that maximizes the sender utility plus $\frac{n}{n-1}$ times the receiver utility, where $n$ is the number of actions. When payments are restricted to be non-negative we prove a dichotomy: the optimal scheme is either the arbitrary-payment signaling scheme or the optimal no-payment scheme. When there are multiple receivers with binary actions and externalities, we prove that a simple scheme is optimal under a budget balanced constraint on the payments. Finally, in the same setting, when no payments are allowed, when the sender and receiver utility functions lie in some cone of set functions $\mathcal C$, we use duality to give a polynomial time reduction from optimal signaling to the optimization problem for set functions in $\mathcal C$.

Our work focuses on characterizing optimal schemes in the simplest cases beyond the tools of prior work. Notably, our tools gain the most traction where there is a ``canonical'' optimal dual (i.e. when the optimal Lagrangian multipliers are independent of the underlying distribution). Compare this to single-dimensional settings in auction design where the payment identity/monotonicity implies a canonical optimal dual as well~\cite{Myerson81}.

The clear direction for future work is to consider more general settings, possibly through the lens of ``simple versus optimal'' persuasion schemes (aligned with similar work in mechanism design). That is, rather than targeting optimal dual solutions (which are likely unwieldy too far beyond the canonical settings studied her), perhaps approximately optimal dual solutions will yield tractable insight for proving approximation results. Results of this form are limited without making use of duality, but do exist. For example,~\citet{DX-16} provides an \emph{even} simpler signaling scheme for a single receiver with a symmetric prior that guarantees a $(1-1/e)$-approximation. Another general direction for future work is to further explore the interplay between payments and persuasion.

% Bibliography
\bibliographystyle{plainnat}
\bibliography{refs}
% Appendix
\appendix
\section{Missing proofs.}
\label{sec:missing-proofs}
\begin{proof}[Proof of Proposition~\ref{prop:symmetry}]
Now, consider any optimal, persuasive signaling scheme $(\phi, p)$, and any optimal dual $\dualv$. For a permutation $\pi:[n]\rightarrow [n]$, let $\pi(\nstate)$ denote the state of nature $[\nstate_{\pi(1)},\ldots, \nstate_{\pi(n)}]$, $\pi(p)$ denote the prices $p(\pi^{-1}(1)),\ldots,p(\pi^{-1}(n))$. Let $\pi(\dualv)$ denote the dual variables with $\pi(\dualv)(i,j) = \dualv(\pi^{-1}(i),\pi^{-1}(j))$. Finally, let $\pi(\phi)$ denote the signaling scheme that on state of nature $\theta$, recommends action $i$ with probability $\phi_{\pi(\nstate)}(\pi^{-1}(i))$. 

Now we just want to confirm that $(\pi(\phi), \pi(P))$ and $\pi(\dualv)$ form another optimal primal/dual pair for any permutation $\pi$ (because $\dist$ is symmetric). Let's first compute the probability that $\pi(\phi)$ recommends action $\pi(i)$. By symmetry of $\dist$, the states $\nstate$ and $\pi^{-1}(\nstate)$ occur with the same probability. On state $\nstate$, $\phi$ recommends action $i$ with probability $\phi_{\nstate}(i)$. On state $\pi^{-1}(\nstate)$, $\pi(\phi)$ recommends action $\pi(i)$ with probability $\phi_{\pi(\pi^{-1}(\nstate))}(\pi^{-1}(\pi(i))) = \phi_{\nstate}(i)$ (this is just chasing through the definition of $\pi(\phi)$). Therefore, because $\nstate$ and $\pi^{-1}(\nstate)$ occur with the same probability, \emph{$\pi(\phi)$ recommends action $\pi(i)$ with the same probability $\phi$ recommends action $i$}. This immediately means that the total payment made in $\phi$ and $\pi(\phi)$ is identical because $(\phi,p)$ pays $p(i)$ whenever $\phi$ recommends action $i$, and $(\pi(\phi),\pi(p))$ pays $p(\pi^{-1}(\pi(i))) = p(i)$ whenever $\pi(\phi)$ recommends action $\pi(i)$, and these two probabilities are the same for all $i$.

Now we also want to claim that the above calculations show that the sender reward is identical under $\phi$ and $\pi(\phi)$ (assuming the receiver takes the recommended action). To see this, observe again that if we couple the events where the state of nature $\nstate$ is drawn for $\phi$, and state $\pi^{-1}(\nstate)$ is drawn for $\pi(\phi)$, then $\phi$ recommends action $i$ with type $\nstate_i$ with probability $\phi_{\nstate}(i)$, and $\pi(\phi)$ recommends action $\pi(i)$ with type $\nstate_{\pi^{-1}(\pi(i))} = \nstate_i$ with probability $\phi_{\nstate}(i)$. So in fact for every state of nature, the expected sender reward on state $\nstate$ under $\phi$ is the same as the expected sender reward on state $\pi^{-1}(\nstate)$ under $\pi(\phi)$. 

Next, we want to claim that the receiver's expected payoff for taking action $\pi(j)$ when $\pi(i)$ is recommended by $\pi(\phi)$ is exactly the same as their expected payoff for taking action $j$ when $i$ is recommended by $\phi$. We can write the expected payoff for taking action $j$ when $i$ is recommended (times the probability that $i$ is recommended by $\phi$) as:
$$\sum_{\nstate} \dist \phi_{\nstate}(i) \rpay{j} = \sum_{\nstate} \dist \pi(\phi)_{\pi^{-1}(\nstate)}(\pi(i)) r_{\pi(\nstate)}(\pi(j)).$$
The RHS now computes exactly the expected payoff for taking action $\pi(j)$ when $\pi(i)$ is recommended (times the probability that $\pi(i)$ is recommended by $\pi(\phi)$). So if $\phi$ is persuasive, then $\pi(\phi)$ is persuasive as well.

Finally, we want to claim that for the dual solution $\pi(\dualv)$, $(\pi(\phi),P)$ is optimal and satisfies complementary slackness. First, observe by the work above that the expected reward for the receiver for taking action $j$ when $i$ is recommended by $\phi$ is exactly the same as the expected reward for receiver for taking action $\pi(j)$ when $\pi(i)$ is recommended by $\pi(\phi)$. Combined with the fact that $\dualv(i,j) > 0 \Rightarrow$ the receiver is indifferent between following the recommendation and taking action $j$ when $\phi$ recommends $i$, and that $\pi(\dualv)(\pi(i),\pi(j)) = \dualv(\pi(\pi^{-1}(i)),\pi(\pi^{-1}(j)) )= \dualv(i,j)$, we immediately conclude that $\pi(\dualv)(\pi(i),\pi(j)) > 0 \Rightarrow$ the receiver is indifferent between following the recommendation and taking action $\pi(j)$ when $\pi(\phi)$ recommends $\pi(i)$. So complementary slackness is satisfied. Finally, observe that:
$$r_{\pi^{-1}(\theta)}^{\pi(\dualv)}(\pi(i)) = \sum_{j \neq i} \pi(\dualv)(\pi(i),\pi(j))\cdot (r_{\pi^{-1}(\theta)}(\pi(i)) - r_{\pi^{-1}(\theta)}(\pi(j)) )= \sum_{j \neq i} \dualv(i,j) \cdot (\rpay{i} - \rpay{j}) = \dualr{i}$$

So if on state $\nstate$, $i$ maximizes $\spay{i} + \dualr{i}$, then on state $\pi^{-1}(\nstate)$, $\pi(i)$ maximizes $\spay{\pi(i)} + r^{\pi(\dualv)}_{\pi^{-1}(\nstate)}(\pi(i))$, and $\pi(\phi)$ is indeed optimal for the Lagrangian problem induced by dual solution $\pi(\dualv)$. All together, this shows that $(\pi(\phi),\pi(P))$ is still persuasive and optimal, as is the dual $\pi(\dualv)$. We conclude by observing that the scheme that samples $\pi$ uniformly at random and then implements $(\pi(\phi),\pi(P))$ is therefore optimal and persuasive, as is the dual that averages $\pi(\dualv)$ over all $\pi$. It is easy to see that both primal and dual are symmetric as per the definitions.
\end{proof}

\begin{proof}[Proof of Proposition~\ref{prop:optimal bb with externalities}]
Similar to \ref{eq:LP-binary-nopayments}, we find the optimal scheme with payments through linear programming. 
We add the budget-balance constraint to the previous program. The optimal signaling scheme satisfying Definition~\ref{def:budget-balance} is the solution of this LP:
\begin{align*}
\max~~~ &\sum_{\theta\in\SSpace}\sum_{S\subseteq [\agents]}\SDist\BScheme{S}\Ffun{S} &\customlabel{eq:LP-binary-payments-1}{\textit{(LP-Budget-Balanced-1)}}\\
&\sum_{\theta\in\SSpace}\SDist\sum_{S\ni i}\BScheme{S}(\util{i}{S}+\pay{i}{1})\geq \sum_{\theta\in\SSpace}\SDist\sum_{S\ni i}\BScheme{S}\util{i}{S\setminus \{i\}}, & \forall i\in[\agents], \\
&\sum_{\theta\in\SSpace}\SDist\sum_{S\notni i}\BScheme{S}(\util{i}{S}+\pay{i}{0})\geq \sum_{\theta\in\SSpace}\SDist\sum_{S\notni i}\BScheme{S}\util{i}{S\cup\{i\}}, & \forall i\in[\agents] \\
&\sum_{i\in [\agents]}\sum_{\theta\in\SSpace }\left(\sum_{S\ni i}\SDist\BScheme{S}\pay{i}{1}+\sum_{S\notni i}\SDist\BScheme{S}\pay{i}{0}\right)=0&~\\
&\sum_{S\subseteq[\agents]}\BScheme{S}=1, ~~\forall \theta\in\SSpace~~~~~~~~~\BScheme{S}\geq 0, \forall S\subseteq [\agents], \theta\in\SSpace&
\end{align*}
For a given signaling scheme $\{\BScheme{S},\pay{i}{.}\}$, we introduce new variables $Q_i(1)$ and $Q_i(0)$ to be equal to the expected payment of agent $i$ for actions $1$ and $0$ respectively, where the expectation is taken over the randomness in the nature and the scheme, i.e.,
\begin{equation}
Q_i(1)\triangleq \sum_{\theta\in\SSpace }\sum_{S\ni i}\SDist\BScheme{S}\pay{i}{1}~~,~~Q_i(0)\triangleq \sum_{\theta\in\SSpace }\sum_{S\notni i}\SDist\BScheme{S}\pay{i}{0}
\end{equation}
We then simplify \ref{eq:LP-binary-payments-1} by rewriting it with variables $\{\BScheme{S},Q_i(.)\}$, i.e.,
\begin{align}
\max~~~ &\sum_{\theta\in\SSpace}\sum_{S\subseteq [\agents]}\SDist\BScheme{S}\Ffun{S}&\customlabel{eq:LP-binary-payments-2}{\textit{(LP-Budget-Balanced-2)}}\nonumber\\
&\sum_{\theta\in\SSpace}\SDist\sum_{S\ni i}\BScheme{S}\util{i}{S}+Q_i(1)\geq \sum_{\theta\in\SSpace}\SDist\sum_{S\ni i}\BScheme{S}\util{i}{S\setminus \{i\}},~~~~~~~~~\forall i\in[\agents]\label{eq:lp-binary-cons1}\\
&\sum_{\theta\in\SSpace}\SDist\sum_{S\notni i}\BScheme{S}\util{i}{S}+Q_i(0)\geq \sum_{\theta\in\SSpace}\SDist\sum_{S\notni i}\BScheme{S}\util{i}{S\cup\{i\}},~~~~~~~~~\forall i\in[\agents] \label{eq:lp-binary-cons2}\\
&\sum_{i\in [\agents]}\left(Q_i(1)+Q_i(0)\right)=0~\label{eq:lp-binary-cons3}\\
&\sum_{S\subseteq[\agents]}\BScheme{S}=1, ~~\forall \theta\in\SSpace~~~~~~~~~\BScheme{S}\geq 0, \forall S\subseteq [\agents], \theta\in\SSpace
\end{align}
To reveal the structure of the optimal scheme, we use the method of Lagrangian multipliers, \'a la Section~\ref{sec:nopayment}, and move the group of constraints \ref{eq:lp-binary-cons1},  \ref{eq:lp-binary-cons2} and \ref{eq:lp-binary-cons3} using dual variables $\{\alpha_i\}$,$\{\beta_i\}$ and $\gamma$ to the objective respectively. By rearranging the terms, this partial Lagrangian function $\mathcal{L}$ will be equal to 
\begin{align}
\mathcal{L}_{\alpha,\beta,\gamma}\left(\phi,Q(1),Q(0)\right)&=\ex{\theta}{\sum_{S}\BScheme{S}\Ffun{S}}\nonumber\\
&+\ex{\theta}{\sum_{i}\alpha_i\sum_{S\ni i}\BScheme{S}\Gfun{i}{S}}-\ex{\theta}{\sum_{i}\beta_i\sum_{S\notni i}\BScheme{S}\Gfun{i}{S\cup\{i\}}}\nonumber\\
&+\sum_{i}(\gamma-\alpha_i)Q_i(1)+\sum_{i}(\gamma-\beta_i)Q_i(0)\label{eq:lagrangian-binary}
\end{align}
Define $\mathcal{S}_\agents$ to be the simplex over all subsets of $[\agents]$. Strong duality implies that the optimal primal-dual solutions of \ref{eq:LP-binary-payments-2} are the solutions of the following min-max program:
\begin{multline*}
\max_{{Q}(1),{Q}(0)\in\mathbb{R}^n,\forall \theta:{\phi}_\theta\in\mathcal{S}_\agents}\left(\min_{{\alpha},{\beta}\in\mathbb{R}_+^n,\gamma\in\mathbb{R} }\mathcal{L}_{\alpha,\beta,\gamma}\left(\phi,Q(1),Q(0)\right)\right)\\
=\min_{{\alpha},{\beta}\in\mathbb{R}_+^n,\gamma\in\mathbb{R} }\left(\max_{{Q}(1),{Q}(0)\in\mathbb{R}^n,\forall \theta:{\phi}_\theta\in\mathcal{S}_\agents}\mathcal{L}_{\alpha,\beta,\gamma}\left(\phi,Q(1),Q(0)\right)\right)
\end{multline*}
By looking at the partial Lagrangian function $\mathcal{L}$ in \eqref{eq:lagrangian-binary}, $\forall i:\alpha^*_i=\beta^*_i=\gamma^*$, simply because otherwise one can make $Q_i(1)$ (or $Q_i(0)$) converging to either $+\infty$ or $-\infty$ to maximize the objective, and make the objective unbounded, contradicting the fact that the linear program is bounded. Therefore:

\begin{align}
\mathcal{L}_{\alpha^*,\beta^*,\gamma^*}&\left(\phi,Q(1),Q(0)\right)=\ex{\theta}{\sum_{S}\BScheme{S}\Ffun{S}} \\
&\qquad+\gamma^*\ex{\theta}{\sum_{i}\left(\sum_{S\ni i}\BScheme{S}\Gfun{i}{S}-\sum_{S\ni i}\BScheme{S}\Gfun{i}{S\cup\{i\}}\right)}\nonumber\\
&=\ex{\theta}{\sum_{S}\BScheme{S}\Ffun{S}}+\gamma^*\ex{\theta}{\sum_{S}\left(\sum_{i\in S}\BScheme{S}\Gfun{i}{S}-\sum_{i\notin S}\BScheme{S}\Gfun{i}{S\cup\{i\}}\right)}\nonumber\\
&=\ex{\theta}{\sum_{S}\BScheme{S}\left(\Ffun{S}+\gamma^*\left(\sum_{i\in S}\Gfun{i}{S}-\sum_{i\notin S}\Gfun{i}{S\cup\{i\}}\right) \right)}\label{eq:lagrangian-binary-after}
\end{align}
As the optimal signaling scheme ${\phi}^*$ should maximize \ref{eq:lagrangian-binary-after}, therefore for every state $\theta$ the scheme should recommend the set $S^*_\theta$ such that
\begin{equation*}
S_\theta^* = \underset{S}{\argmax}~\left(\Ffun{S}+\gamma^*\left(\sum_{i\in S}\Gfun{i}{S}-\sum_{i\notin S}\Gfun{i}{S\cup\{i\}}\right) \right)
\end{equation*}
To compute the payments, let $x^*_i=\pr{\theta}{i\in S^*_\theta}$. Now, to have a payment rule whose expectation is equal to $\mathbf{Q}(.)$, we can define the payments as
\begin{equation}
p^{*(i)}(1)=\frac{Q_i(1)}{x^*_i}\cdot\mathbb{I}\{i \in S_\theta^*\},~~p^{*(i)}(0)=\frac{Q_i(0)}{1-x^*_i}\cdot\mathbb{I}\{i \notin S_\theta^*\}
\end{equation}
Finally, it is clear that any signaling scheme without payments is a feasible solution for \ref{eq:LP-binary-payments-1}, and hence the sender's expected payoff in the above scheme (i.e. with allocation $\{S^*_\theta\}$ and payments ${p}^*(1),{p}^*(0)$)   is no smaller than the expected payoff of the sender in the optimal scheme without payments, which completes the proof.
\end{proof}

\newcommand{\conic}{\textsc{cone}}

\section{ Computing optimal signaling with positive externalities}\label{sec: computation plus externalities}
As the LP for binary signaling has exponentially many variables, one might wonder how hard it is to compute the optimal signaling scheme. In this section, we show a formal reduction from computing the optimal signaling scheme with positive externalities (and without payments\footnote{For the remaining of the section, we set the payments to be zero.}) to the optimization of a special class of set functions. Our reduction extends one direction of the reduction in~\cite{DH-17}, and uses techniques similar in spirit to~\cite{CDW-12,CDW-13,DW-15}.

The positive externality simply means that an agent switching from action $0$ to $1$ cannot harm any other agent. More formally, we have the following property.
 \begin{definition} 
 \label{def:positive}
 A profile of utility functions $\{\util{i}{.}\}$ has \emph{positive externalities} if and only if for every state of the nature $\nstate\in\SSpace$, and for every $S\subseteq[\agents]$, $i\in S$, and $j\neq i, j\in S$ we have: 
 \begin{align*}
 \Gfun{i}{S}\geq  \Gfun{i}{S\setminus \{j\}}
 \end{align*}
 where again, $\Gfun{i}{S}=\util{i}{S}-\util{i}{S\setminus \{i\} }$.
 \end{definition}

We quickly observe that under this property, it is in fact without loss to drop the constraints \ref{eq:cons-2} from the linear program \ref{eq:LP-binary-nopayments} in Section~\ref{sec:binary-lp}.

\begin{lemma}
\label{lem:drop-cons}
Suppose $\polytope=\vec{0}$. Under positive externalities,  there exists an optimal solution for the linear program~\ref{eq:LP-binary-nopayments} that also solves the same LP without constraints \ref{eq:cons-2}.
\end{lemma}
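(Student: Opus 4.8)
\textbf{Proof plan for Lemma~\ref{lem:drop-cons}.}

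The plan is to show that any optimal solution to \ref{eq:LP-binary-nopayments} (with $\polytope=\vec 0$, so no payments) can be transformed, without decreasing the sender's objective and without violating the constraints \ref{eq:cons-1}, into a solution in which the set of constraints \ref{eq:cons-2} never binds in the wrong direction --- equivalently, a solution that remains feasible and optimal even when \ref{eq:cons-2} is deleted. The key structural fact to exploit is positive externalities: by Definition~\ref{def:positive}, the marginal utility $\Gfun{i}{S}$ is monotone in $S$, so in particular $\Gfun{i}{S\cup\{i\}}$ is small when few other agents take action $1$. The intuition is that a receiver $i$ who is \emph{not} recommended action $1$ has little incentive to deviate upward, precisely because positive externalities mean her gain from switching to $1$ is dominated by her gain in the ``fuller'' sets where she \emph{is} recommended $1$. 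I expect to make this precise by comparing the two relevant marginal-utility sums.

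First I would recall the rewritten persuasiveness constraints \ref{eq:cons-1} and \ref{eq:cons-2}, noting that \ref{eq:cons-2} is the downward-deviation constraint $\sum_{\theta}\SDist\sum_{S\notni i}\BScheme{S}\Gfun{i}{S\cup\{i\}}\le 0$, and that under positive externalities each term $\Gfun{i}{S\cup\{i\}}$ is controlled by comparing to marginals over larger sets. The main step is to argue that an optimal solution of the relaxed LP (with \ref{eq:cons-2} dropped) automatically satisfies \ref{eq:cons-2}. I would take an optimal solution $\{\BScheme{S}\}$ of the relaxed program and, for each agent $i$ for whom \ref{eq:cons-2} is violated, perform a local ``repair'': shift probability mass from subsets $S\ni i$ on which $i$ is recommended toward the corresponding subset $S\setminus\{i\}$ (or vice versa), on a per-state basis, so as to fix the recommendation to agent $i$ in the direction that maximizes the sender's objective contribution while respecting that the sender's payoff $\Ffun{\cdot}$ is monotone. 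Because $\Ffun{\cdot}$ is a monotone set function and externalities are positive, such a reallocation can be shown not to decrease $\ex{\theta}{\sum_S \BScheme{S}\Ffun{S}}$ and not to break the remaining constraint \ref{eq:cons-1}; the monotonicity of marginals guarantees the swap moves \ref{eq:cons-2} in the satisfying direction.

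The hard part, and the step I would spend the most care on, is verifying that the mass-shifting operation simultaneously (i) preserves feasibility of \ref{eq:cons-1}, (ii) does not decrease the objective, and (iii) drives \ref{eq:cons-2} toward satisfaction --- all three at once, rather than trading one against another. The delicacy is that \ref{eq:cons-1} and \ref{eq:cons-2} couple the same allocation variables $\BScheme{S}$ across different subsets, and the positive-externality inequality $\Gfun{i}{S}\ge \Gfun{i}{S\setminus\{j\}}$ must be invoked in exactly the right orientation to show that increasing agent $i$'s recommendation probability helps \ref{eq:cons-1} at least as much as it could hurt \ref{eq:cons-2}. I would structure the argument so that the repair is carried out one agent at a time and show that it never reintroduces a violation for previously-fixed agents, so that after finitely many steps all of \ref{eq:cons-2} holds; at that point the solution is feasible for the full \ref{eq:LP-binary-nopayments}, optimal (its objective equals the relaxed optimum, which upper-bounds the full optimum), and hence witnesses that dropping \ref{eq:cons-2} leaves the optimal value unchanged.
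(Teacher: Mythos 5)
Your overall route is the same as the paper's: take an optimal solution of the relaxation of \ref{eq:LP-binary-nopayments} obtained by deleting \ref{eq:cons-2}, then repair it by flipping recommendations, using monotonicity of $\Ffun{\cdot}$ to protect the objective and positive externalities to protect \ref{eq:cons-1}, and conclude since the repaired solution is feasible for the full LP with objective equal to the relaxed optimum. However, two of your steps need correction. First, you leave the direction of the mass shift ambiguous (``from $S\ni i$ toward $S\setminus\{i\}$, or vice versa''). Only one direction works: a violation of \ref{eq:cons-2} for agent $i$ means there exist $\theta$ and $T\ni i$ with $\BScheme{T\setminus\{i\}}>0$ and $\Gfun{i}{T}>0$, and the repair must move that mass \emph{upward}, from $T\setminus\{i\}$ to $T$. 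This single move is simultaneously good on all fronts: it deletes a strictly positive term from the left side of \ref{eq:cons-2} for $i$; it \emph{adds} the positive term $\Gfun{i}{T}$ to the left side of \ref{eq:cons-1} for $i$; it weakly raises the objective because $\Ffun{\cdot}$ is monotone; and it preserves \ref{eq:cons-1} for every $j\in T\setminus\{i\}$ because positive externalities give $\Gfun{j}{T}\geq\Gfun{j}{T\setminus\{i\}}$. The downward shift fails on every one of these counts.

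Second, and more seriously, your proposed termination invariant --- that a repair ``never reintroduces a violation for previously-fixed agents'' --- is false. Repairing agent $i$ by moving mass from $T\setminus\{i\}$ to $T$ makes every agent $j\notin T$ \emph{more} tempted to deviate upward, since positive externalities give $\Gfun{j}{T\cup\{j\}}\geq\Gfun{j}{(T\setminus\{i\})\cup\{j\}}$; thus the left side of \ref{eq:cons-2} for such a $j$ can strictly increase, re-violating a constraint you had already fixed, and an induction on ``agents fixed so far'' collapses. The paper's process tolerates re-violations and gets termination from a different monotonicity: each repair sends the entire mass at $(\theta,T\setminus\{i\})$ to the strict superset $T$, moves never split mass, and the subset lattice has height $\agents$, so only finitely many moves can occur; moreover, once some mass recommends action $1$ to an agent, positive externalities guarantee that this recommendation stays persuasive through all later (upward) moves, so the repairs never undo each other's action-$1$ recommendations even though \ref{eq:cons-2} constraints may be revisited. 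If you replace your invariant with this lattice-monotonicity argument, your plan becomes exactly the paper's proof.
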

\begin{proof}
Consider any optimal solution to the LP without \ref{eq:cons-2} when $\polytope=\vec{0}$. If all constraints in \ref{eq:cons-2} happen to be satisfied anyway, then we're done. If not, we claim that we can satisfy \ref{eq:cons-2} without harming the quality of the solution or persuasiveness. Observe that if  \ref{eq:cons-2} is not satisfied by some optimal solution $\{\BScheme{S}\}$, then there exists $i,\theta,$ and $T$ such that $i\in T$, $\Gfun{i}{T}>0$ and $\BScheme{T\setminus\{i\}}>0$. Consider modifying $\{\BScheme{S}\}$ by moving all mass from $\BScheme{T\setminus\{i\}}$ to $\BScheme{T}$ instead. Observe that:
\begin{itemize}
\item This only makes the sender weakly happier, as the sender's payoff function is monotone. 
\item This only makes receiver $i$ strictly happier, as we had $\Gfun{i}{T} > 0$. 
\item If $j\neq i, j\in T$ was recommended action $1$, then this recommendation still remains persuasive, due to the marginal cross-monotonicity property.
\item if $j\neq i, j\in T$ was recommended action $0$ and this action is not persuasive anymore, modify $\{\BScheme{S}\}$ by recommending action $1$ to her instead.
\end{itemize}
Now, repeat the above process until there is no constraint in \ref{eq:cons-2} violated. This process terminates in finite time, because once a person is recommended action $1$ this recommendation remains persuasive until the end, and at each iteration we either terminate or make progress by recommending action $1$ to least one more person. Furthermore, this process only makes the sender's expected payoff higher, completing the proof.
\end{proof}

Applying Lemma~\ref{lem:drop-cons}, the final primal-dual LP (without payments) can be simplified as:
\begin{align*}
\max &\sum_{\theta\in\SSpace}\sum_{S\subseteq [\agents]}\SDist\BScheme{S}\Ffun{S}
&\min &\sum_{\theta\in\SSpace}y_\theta \\
&\sum_{\theta\in\SSpace}\SDist\sum_{S\ni i}\BScheme{S}\Gfun{i}{S}\geq 0,~~~\forall i\in[\agents]
&~~&y_\theta-\sum_{i\in S}\alpha_i\Gfun{i}{S}\SDist\geq \Ffun{S}\SDist,&\forall S\subseteq [\agents], \theta\in\SSpace  \\
&\sum_{S\subseteq[\agents]}\BScheme{S}=1, ~~~~~~~~~~~~~~~~~~~~~\forall \theta\in\SSpace
&~~&\alpha_i\geq 0,&\forall i\in[\agents]  \\
&\BScheme{S}\geq 0, ~~~~~~~~~~~~~~~~~\forall S\subseteq [\agents], \theta\in\SSpace
&~~&&
\end{align*}

We also need to define \emph{convex cones} over the space of set functions before describing our result.
\begin{definition}
Any subset $\mathcal{C}$ of $\mathbb{R}^\Omega$, where $\Omega\triangleq 2^{[N]}$, is a \emph{cone} if and only if for each set function $h\in\mathcal{C}$ and scalar $\alpha\in\mathbb{R}_{+}$, $\alpha\cdot h\in\mathcal{C}$. Moreover, given set functions $f_1(.),\ldots,f_m(.)$, the \emph{conic hull} of these functions is defined as $\conic(f_1,\ldots,f_m)\triangleq \{f:2^{[\agents]}\rightarrow \mathbb{R}: f=\sum_{i=1}^{m}w_i\cdot f_i, w_i\geq 0\}$, which indeed is a cone in $\mathbb{R}^\Omega$.
\end{definition}
%\begin{definition}
%Given set functions $\mathcal{T}=(f,g_1,\ldots,g_\agents)$, the \emph{viable signaling instances of $\mathcal{T}$}, denoted by $\mathcal{I}(\mathcal{T})$, are instances of binary signaling with externalities such that  $f^\theta=f$ for every state $\theta\in \SSpace$, and $\exists~a^{\theta}_{i},b_i^{\theta}\in \reals$  such that $g_i^{\theta}=a^{\theta}_{i}+b_i^{\theta}g_i$ for every state $\theta\in \SSpace$.
%\end{definition}

Given these definitions, we formally prove the following proposition.

\begin{proposition} [BP with externalities~$\Rightarrow$~Set optimization  ]
\label{thm:opt-to-sig}
Let $\mathcal{C}\subseteq {\mathbb{R}^\Omega}$ be a cone of set functions, and suppose there is polynomial time algorithm that returns $\argmax_{S\subseteq [N]}h(S)$ for every set function $h\in \mathcal{C}$. Then, there exists a polynomial time algorithm that computes the optimal signaling scheme for every instance of Bayesian persuasion with externalities satisfying the following:
\begin{enumerate}
\item For every state of the nature $\theta\in \SSpace$, $f^{\theta}\in\mathcal{C},~g_i^{\theta}\in\mathcal{C}$ for all $i\in[N]$, and
\item For every state of the nature $\theta\in \SSpace$, $f^\theta$ is monotone non-decreasing and agents have positive externalities as in Definition~\ref{def:positive}.
\end{enumerate}
\end{proposition}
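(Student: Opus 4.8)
The plan is to solve the exponential-size primal LP for optimal signaling via the ellipsoid method applied to its dual, using the assumed set-function optimization oracle as the separation oracle. The first step is to invoke Lemma~\ref{lem:drop-cons}: under positive externalities and monotonicity of each $\Ffun{\cdot}$, it is without loss to drop the constraints~\ref{eq:cons-2}, leaving the simplified primal--dual pair displayed just before the proposition. This reduction is essential rather than cosmetic. In the surviving dual, only the nonnegative multipliers $\alpha_i$ attached to the~\ref{eq:cons-1} constraints remain, so the dual separation problem will reduce to maximizing a single set function that stays inside $\mathcal{C}$. Had the~\ref{eq:cons-2} constraints been retained, the dual would also carry terms of the form $-\beta_i\Gfun{i}{S\cup\{i\}}$; since a cone is closed only under nonnegative scaling, the negative coefficient would obstruct membership in $\mathcal{C}$, and the oracle could not be applied.

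Next I set up the separation oracle for the dual $\min\sum_{\theta}y_\theta$ subject to $y_\theta-\sum_{i\in S}\alpha_i\Gfun{i}{S}\SDist\geq\Ffun{S}\SDist$ for all $\theta,S$ and $\alpha_i\geq 0$. Given a candidate $(y,\alpha)$, the constraints for a fixed state $\theta$ all hold if and only if $y_\theta\geq\SDist\cdot\max_{S\subseteq[\agents]}\bigl(\Ffun{S}+\sum_{i\in S}\alpha_i\Gfun{i}{S}\bigr)$, using $\SDist\geq 0$. Because $\Gfun{i}{S}=0$ whenever $i\notin S$, the inner sum equals $\sum_{i\in[\agents]}\alpha_i\Gfun{i}{S}$, so the maximand is the single set function $h^{\theta}\triangleq\Ffun{\cdot}+\sum_{i}\alpha_i\Gfun{i}{\cdot}$. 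Since $\Ffun{\cdot}\in\mathcal{C}$, each $\Gfun{i}{\cdot}\in\mathcal{C}$, the multipliers satisfy $\alpha_i\geq 0$, and $\mathcal{C}$ is closed under conic combinations (a convex cone), we conclude $h^{\theta}\in\mathcal{C}$. Hence one call to the assumed oracle returns $\argmax_S h^{\theta}(S)$ and certifies or refutes the constraints for state $\theta$; ranging over the $|\SSpace|$ states yields a polynomial-time separation oracle.

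Finally, by the polynomial-time equivalence of separation and optimization, the ellipsoid method solves the dual in polynomial time, generating along the way only polynomially many violated constraints, i.e. a polynomial set of pairs $(\theta,S)$. To output the scheme itself, I would solve the primal restricted to the variables $\BScheme{S}$ for exactly these pairs, retaining the simplex and~\ref{eq:cons-1} constraints: this is a polynomial-size LP whose optimal value equals that of the restricted dual, hence of the full program, and whose optimal solution is an optimal signaling scheme. The main obstacle I anticipate is the primal-recovery bookkeeping---ensuring the restricted primal is feasible (including the variable for $S=\emptyset$ in every state suffices to keep each simplex constraint satisfiable) and that the generated constraints genuinely pin down the optimum---together with pinning down the convex-cone closure used to place $h^{\theta}$ in $\mathcal{C}$, since the bare cone definition only guarantees closure under nonnegative scaling rather than under the conic combinations we need here.
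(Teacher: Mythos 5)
Your proposal is correct and takes essentially the same approach as the paper: drop the constraints~\ref{eq:cons-2} via Lemma~\ref{lem:drop-cons}, reduce the dual separation problem for each state $\theta$ to maximizing the set function $\Ffun{\cdot}+\sum_{i}\alpha_i\Gfun{i}{\cdot}$, which lies in $\mathcal{C}$ since $\alpha_i\ge 0$, and solve everything in polynomial time via the equivalence of separation and optimization. The extra care you take---the explicit primal-recovery step and the remark that one needs $\mathcal{C}$ closed under conic combinations rather than mere nonnegative scaling---only makes explicit what the paper leaves implicit (its assertion $\conic(f^\theta,g^\theta_1,\ldots,g^\theta_{\agents})\subseteq\mathcal{C}$ is precisely that convex-cone closure).
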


%\begin{proposition}[BP with externalities~$\Rightarrow$~Set optimization  ]
%Let  $\Ffun{S}$ be the payoff function of the sender and $\{\Gfun{i}{\cdot}\}$ be the marginal payoff functions of the receivers, and assume for every state of the nature $\theta$,  $\Ffun{S}$ is monotone non-decreasing and receivers have positive externalities. Given access to a polynomial time algorithm for optimizing any set function in $\conic (f^\theta,g_1^\theta,\ldots,g_{\agents}^\theta)$ for any state of , there is a polynomial time algorithm for computing the optimal signaling scheme.
%\end{proposition} 
%\begin{theorem} [Set optimization $\leq_p$ Optimal signaling ] 
%\label{thm:sig-to-opt}
%Given a monotone set function $f$ and set functions $g_1,\ldots,g_{\agents}$, there is a polynomial time algorithm for maximizing any set function in $\conic (f, g_1,\ldots,g_{\agents})$ if there is a polynomial time algorithm for computing the optimal signaling scheme of any instance in $\mathcal{I}(f, g_1,\ldots,g_{\agents})$.
%\end{theorem} 

\begin{proof}
Consider the primal linear program of optimal signaling. By using linear programming duality, solving the primal LP is reduced to solving its dual. Moreover, because of the equivalence between optimization and separation, solving the dual program is reduced to finding a separation oracle for the following set of dual constraints:
$$
y_\theta-\sum_{i\in S}\alpha_i\Gfun{i}{S}\SDist\geq \Ffun{S}\SDist,~~~\forall S\subseteq [\agents], \theta\in\SSpace
$$
Clearly $\Gfun{i}{S}=\util{i}{S}-\util{i}{S\setminus\{i\}}=0$ for $i\notin S$, and hence $\sum_{i\in S}\alpha_i\Gfun{i}{S}=\sum_{i\in [\agents]}\alpha_i\Gfun{i}{S}$. By dividing both sides of the constraint by $\SDist$, the separation problem for every $\theta\in\SSpace$ is equivalent to finding $S^*=\argmax_{S\subseteq[\agents]} \sum_{i\in [\agents]}\alpha_i\Gfun{i}{S}+  \Ffun{S}$ for given $\alpha_i$'s. Also, $\alpha_i\geq 0$, and hence $\sum_{i\in [\agents]}\alpha_i\Gfun{i}{S}+  \Ffun{S}\in \conic({f^\theta,g_1^{\theta},\ldots,g_{\agents}^\theta}) \subseteq \mathcal{C}$. So, given access to the algorithm that solves the set function optimization over $\mathcal{C}$ in polynomial time, we can find $S^*$ in polynomial time. Given this separation oracle, we can now solve the persuasion problem in polynomial time.
\end{proof}

\section{Reduced forms in i.i.d. signaling}
\label{appendix:reduced form}
By making use of ideas related to reduced forms in Bayesian auctions and Border's theorem~\cite{B-91,M-84,MR-84}, we can simplify the LP for the independent setting. These ideas have already appeared in~\cite{DX-16}, but we reproduce them here for completeness. Let $B^\nstate \in \{ 0,1 \}^{n \times m}$ be an $n$ by $m$ matrix such that $B_{ij}^\nstate = 1$ iff $\nstate_i = j$ (action $i$ has type $j$). Define $\signi{i} = \sum_{\nstate \in \SSpace}\dist \scheme{i} B^{\theta}$, that is $\signijk{i}{j}{k}$ is the joint probability that action $j$ has type $k$ and the scheme outputs action $i$.
$\signv = \left( \signi{1}, \dots, \signi{n} \right) \in \mathbb{R}^{n \times m \times n}$ is the \emph{signature} or \emph{reduced form} of $\schemev$.
$\signv$ is realizable if there exists a signaling scheme $\schemev$ with $\signv$ as its signature. We write $\spolytope$ for the polytope of realizable signatures (see \cite{DX-16} for more details and properties of these reduced forms). The program for this special case, using the same change of variables as in the general case, is:
\begin{align*}
\max~~~&\sum_{i \in [n]} \sum_{k \in [m]} \signijk{i}{i}{k}\sipay{i}{k} -\sum_{i\in [n]}P(i)&\customlabel{eq:Lp-independent-payments}{\textit{(LP-Independent-with-Payments)}}\\
&P(i)+ \sum_{k \in [m]}  \signijk{i}{i}{k}\ripay{i}{k}  \geq \sum_{k \in [m]}  \signijk{i}{j}{k}\ripay{i}{k}   , & \text{ for } i,j \neq i \in [n] \\
&\signv \in \spolytope, ~~~\text{and}~~~P \in\polytope &
\end{align*}
While these notations are helpful to identify the computational hardness of signaling problems, as in \cite{DH-17,DX-16}, we never use them in our treatment, and we only mention them here for completeness.

\end{document}